\newtheorem{proposition}{Proposition}[section]
\newtheorem{assumption}{Assumption}[section]
\newcommand{\lat}{\latintext}
\begin{document}

\title{\lat \Large{\color{blue}Parametric Sensitivity Analysis for Stochastic Molecular Systems using Information Theoretic Metrics}}
%\author{\large{Anastasios Tsourtis, Yannis Pantazis, MarkosA. Katsoulakis, Vagelis Harmandaris}}

%\author{Anastasios Tsourtis}
%\affiliation{Department of Applied Mathematics, University of Crete}
%\author{Yannis Pantazis}
%\author{Markos A. Katsoulakis}
%\affiliation{Department of Mathematics, University of Massachusetts, Amherst}
%\author{Vagelis Harmandaris}
%\affiliation{Department of Applied Mathematics, University of Crete}

\author{Anastasios Tsourtis}
\email[E-mail: ]{tsourtis@uoc.gr}
\affiliation{Department of Mathematics and Applied Mathematics, University of Crete, Greece}
\author{Yannis Pantazis}
\email[E-mail: ]{yannis.pantazis@gmail.com}
\author{Markos 	A. Katsoulakis}
\email[E-mail: ]{markos@math.umass.edu}
\affiliation{Department of Mathematics and Statistics, University of Massachusetts, Amherst, USA}
\author{Vagelis Harmandaris}
\email[E-mail: ]{harman@uoc.gr}
\affiliation{Department of Mathematics and Applied Mathematics, University of Crete, Institute of Applied and Computational Mathematics (IACM), Foundation for Research and Technology Hellas (FORTH), GR-70013, Heraklion, Crete, Greece}

\date{\today}

\begin{abstract}
In this paper we extend the parametric sensitivity analysis (SA) methodology proposed in Ref. [Y. Pantazis and M. A. Katsoulakis, J. Chem. Phys. 138, 054115 (2013)]
  
to
continuous time and continuous space Markov processes represented by stochastic differential equations and, particularly,
stochastic molecular dynamics as described by the Langevin equation. The utilized SA method is based on the computation
of the information-theoretic (and thermodynamic) quantity of relative entropy rate (RER) and the associated Fisher information
matrix (FIM) between path distributions. A major advantage of the pathwise SA method is that both RER and pathwise FIM
depend only on averages of the force field therefore they are tractable and computable as ergodic averages from a single run
of the molecular dynamics simulation both in equilibrium and in non-equilibrium steady state regimes. We validate the
performance of the extended SA method to two different molecular stochastic systems, a standard Lennard-Jones fluid and
an all-atom methane liquid and compare the obtained parameter sensitivities with parameter sensitivities on three popular
and well-studied observable functions, namely, the radial distribution function, the mean squared displacement and the pressure.
Results show that the RER-based sensitivities are highly correlated with the observable-based sensitivities.

\end{abstract}

\keywords{Relative Entropy, Sensitivity Analysis, Fisher Information Matrix, Langevin dynamics, Methane Molecular Dynamics}
\maketitle

\section{Introduction}
Molecular simulation is the bridge between theoretically developed models and experimental approaches for the study of molecular systems in the atomistic level. \cite{CompSim,*Frenkel2001} Nowadays, molecular simulation methodologies are used extensively to predict structure-properties relations of complex systems. The importance of numerical simulations in material sciences, biology and chemistry has been
recently acknowledged by the 2013 Nobel price in Chemistry ``for the development of multiscale % maybe add few more applications
models in complex chemical systems''. The computational modeling of realistic complex molecular systems at the molecular level requires long molecular simulations for an enormous distribution of length and time scales.
\cite{Theodorou_book,Shaw:10,Harmandaris_Fritz, Harmandaris_Johnston_review_softmatter} The properties of the model systems depend on a large number of parameters, which are usually obtained utilizing optimization techniques matching specific data taken either from more detailed (e.g. ab-initio)
simulations or from experiments. Furthermore, stochastic modeling is especially important for
describing the inherent randomness of molecular dynamics in various scales.

All the above complexities imply the need of rigorous mathematical tools for the analysis of both
deterministic and stochastic molecular systems. Uncertainty quantification (UQ) in computational
chemistry is of paramount importance, especially in multiscale modeling, where properties evaluated
at the atomic-molecular scale are transferred to the mesoscopic scale. \cite{Harmandaris_Fritz,Chernatynskiy_review}
Sources of epistemic uncertainty can stem from (i) numerical uncertainty, (ii) model uncertainty and (iii) parametric uncertainty.
Numerical uncertainties are related to the finite time of the dynamic simulation, the number of particles, as well as values of the parameters related to the numerical method used (e.g. time-step), to name some.
Model uncertainty comes from the specific force field representation and its calibration to experimental properties, and the usage of specific boundary conditions.
Parametric uncertainties stem from errors in parameter values due to noisy or insufficient measurements.
Of all the above, the uncertainty associated with the parameters of the potential is the least understood. \cite{Chernatynskiy_review, Koumoutsakos}
Furthermore, intrinsic stochasticity of the system is added
on top of epistemic uncertainty. This type of uncertainty is also called aleatoric.

%There exists a diverge range of UQ approaches proposed in the literature. Variance-based methods
%such as analysis of variance (ANOVA) \cite{Liu_Chen,lookformore} provide a way to capture the variability of the output as a function of the input's variability . An alternative approach to UQ is Bayesian statistical analysis
%which has been used in applications to deal with large uncertainties. Indeed, utilizing the Bayes theorem
%which provides a rigorous relation between parameters and observed data, estimation and prediction analysis
%is performed \cite{Pernot}. An in depth study of argon force field parameters using Bayesian uncertainty
%quantification and the resulting propagation of uncertainty on the posterior PDF's has demonstrated
%their effectiveness \cite{Koumoutsakos}. Both variance and Bayesian UQ approaches are based on
%multiple and usually expensive Monte Carlo runs resulting in huge computational cost.
%Another UQ approach is based on polynomial chaos expansions and an in-depth study using these expansions in MD for uncertainty quantification, where the authors isolated the influence of parametric uncertainty from the intrinsic noise, has been recently presented by Rizzi et al. \onlinecite{Rizzi}. However, the polynomial chaos expansion becomes
%intractable when the parameter space is large.
%{\red{[VH: Shorten the UQ - SA discussion by merging this and the next paragraphs in one.]}}

There exists a diverse range of UQ approaches proposed in the literature. Variance-based methods
such as analysis of variance (ANOVA), \cite{Liu_Chen} Bayesian statistical analysis \cite{Pernot,Koumoutsakos} (applications to deal with large uncertainties) and polynomial chaos expansions \cite{Rizzi} have been widely used. The first two methods
are based on multiple and usually expensive Monte Carlo runs resulting in huge computational cost whereas the latter becomes intractable when the parameter space is large. An in depth study of this last method in MD has recently been presented by Rizzi et al. \onlinecite{Rizzi}.

Sensitivity analysis (SA) is a powerful tool that gives insight of how small variations (uncertainty) in
system parameters (input), can affect the output of the system substantially. Such perturbations occur
from computational errors, uncertainty and errors resulting from experimental parameter estimation
\cite{Rao} (such as parameter fitting through ensemble averages of macroscopic thermodynamic
quantities). Thus, parametric SA can provide critical insights in uncertainty quantification. Especially
in the stochastic setting (e.g., Langevin dynamics in molecular systems), SA is performed by analysis of the system's mean
behavior, i.e., several simulations starting from a configuration at the stationary (or steady states) regime.
The stationary regime is crucial for complex molecular systems since it captures not only static quantities
such as the radial distribution function but also dynamical quantities which includes transitions between metastable
states in complex, high-dimensional energy landscapes and intermittency. \cite{PantazisRER} Depending
on the magnitude of the perturbations, SA can be classified into local (infinitesimal, one-at-a-time
parameter perturbation) and global (finite, multiple parameter perturbation).

Furthermore, the role of SA is not restricted to UQ but it is of pivotal significance in several other applications.
First, robustness of a system meaning the stability of the behavior under simultaneous changes in
model parameters %typically observed in many biological phenomena (such as homeostasis, canalization, plasticity etc)
or variations of orders of magnitude in insensitive parameters that insignificantly affect the dynamics
 can be addressed utilizing parametric SA approaches.
% Moreover, robustness is the relevance between the value of the system parameters and its behavior, meaning
%that it has been reported that even variations of orders of magnitude in insensitive parameters
%affect insignificantly the dynamics and therefore can be ignored.
Second, sensitivity analysis on experiment
conditions under which information loss is minimized, establish optimal experimental design.
\cite{Emery:98} Furthermore, identifiability analysis employs SA to determine a priori whether certain
parameters can be estimated from experimental data of a given type. The work in Ref. ~\onlinecite{Cooke}
contains a general framework of SA in MD (proteins) using the observable helicity
while cross-validation with experimental data is also displayed.
%The authors have used SA on the potential parameters as to refine them in order to match the data at hand.
Overall, SA plays a fundamental role in multiscale design and as it has been highlighted by Braatz et al.\cite{Braatz:06}.

Typically in a stochastic setting, the most common local parametric SA method is based on partial
derivatives on ensemble averages of quantities of interest around a nominal parameter value. \cite{Gunawan}
Large derivatives indicate strong sensitivity of the observable to the particular parameter while
the opposite holds when the derivative values are small. There has been an increasing number
of methods to compute the partial derivatives especially in discrete-event systems whose applications
range from biochemical reaction networks to operations
research and queuing theory. Finite-difference approaches based on common random numbers,
\cite{Gunawan} on common reaction path \cite{Sheppard_Khammash} which exploits positive
correlations among coupled perturbed/unperturbed reaction paths as well as coupling methods \cite{Anderson:12, AK:2013}
have been proposed. There are certain issues associated with these finite-difference approaches;
the estimator of the partial derivative has bias while the variance of the gradient estimator increases
with the dimension of the parameter space. Instead of using the finite-difference approaches, one can
utilize Girsanov measure transformation to directly compute the infinitesimal sensitivity.
\cite{Glynn:90, Plyasunov, Warren_Allen}
% In Ref. ~\onlinecite{Plyasunov}, the authors constructed an
%unbiased estimator that does not depend on the specific observable function.
In MD simulations where both time and states are typically continuous, Iordanov et al. \cite{Iordanov}
performed SA in three potentials of different functional forms (and number
of parameters) in order to compare their influence in thermodynamic quantities. Instead of perturbing
the potential parameters
 they scaled the potentials one-at-a-time (local SA) aiming to minimize the discrepancy from the experimental values
of each observable separately.

Another class of sensitivity methods which is not focused on specific observable functions but on the
overall properties of the stochastic process is based on information theory concepts. Application of information-theoretic
SA methods to analysis of stochastic models uses quantities such as entropy, relative entropy (or Kullback-Leibler
divergence), the corresponding Fisher information matrix as well as mutual information. Relative entropy
(RE) measures the inefficiency of assuming a perturbed (or "wrong") distribution instead of assuming the
unperturbed (or "true") one. RE have been used for the SA study of climate models \cite{Majda_Gershgorin_2010}
where the equilibrium probability density function (PDF) has been obtained through an entropy maximization subject
to constraints induced by the measurements, while Fisher information matrix (FIM) is an indispensable tool for
optimal experimental design. \cite{Emery:98}
In a  typical SA approach based on RE,  explicit knowledge of the equilibrium PDF is assumed. However, in systems with non-equilibrium steady states
(NESS) (i.e., systems in which a steady state is reached but the detailed balance condition is violated), there are
no explicit formulas for the stationary distribution and even when a Gibbs measure is available, it is usually
computationally inefficient to sample from. Such non-equilibrium systems are common place in molecular systems with multiple mechanisms such as reaction-diffusion systems or driven molecular systems. \cite{Harman_non_eq}

In Ref.~\onlinecite{PantazisRER}, the RE between path distributions (i.e., distributions of the particle trajectories) for
discrete time or discrete event systems was been utilized as a measure of sensitivity. When the system is in
stationarity the relative entropy of the two path distributions decomposes into two parts; (i) the relative entropy
rate (RER) that scales linearly with time and (ii) a constant term related to the relative entropy of the initial
distribution of the system. In long times, the first term dominates providing major insights on the sensitivity
of the system with respect to parameter perturbations. In this work, we extend the SA method proposed in
Ref.~\onlinecite{PantazisRER} to the case of stochastic differential equations (i.e., continuous-time and continuous
state space) and particularly the Langevin equation. Furthermore, when perturbations are small, a Taylor
expansion on RER is performed revealing the lower order of this expansion which is the pathwise FIM associated
with the RER. Practically, RER is an observable of the stochastic process which can be computed numerically as
an ergodic average in a straightforward manner as it only requires local dynamics (in our case the forces). Similarly,
FIM computations are feasible in the same fashion with the advantage of being more informative since any
perturbation direction can be explored. Both of these observables can be sampled on the fly, from a single MD run
since only the reference process needs to be simulated. Finally, spectral methods for the calculation of RER were introduced for the over-damped Langevin case in Ref.~\onlinecite{Haas:2013}.

%All of the above suggest that pathwise SA methods based on path-space RE are suitable to tackle with the existing problems.

%{\red{[VH: Move this paragraph to Conclusions]}}
The studied pathwise SA method has major advantages which can be listed as follows. First, it is gradient-free
method which does not require knowledge of the equilibrium PDF. By gradient-free we mean that the pathwise FIM does
not depend on the extent of the perturbation, so when the computation for different perturbations is necessary
(especially in high-dimensional problems) the extra cost is minimal in comparison to the straightforward RER calculation.
Second, it is rigorously valid for long-time, stationary dynamics in path-space including metastable dynamics in a
complex landscape. Third, it is suitable for non-equilibrium systems from statistical mechanics perspective; for
example in NESS processes such as dissipative systems where the structure of the equilibrium PDF is unknown.
Fourth, it is fast since it requires samples only from the unperturbed process which can be also obtained
in a trivially  parallel manner.

%The use of MD does not prohibit the use of stochastic dynamics as can be seen in \cite{Karplus}, where a
%stochastic boundary condition method involving classical MD simulations along with Langevin dynamics is
%developed. Stochastic simulation in this context supplements MD using Newton's eq. of motion while avoiding
%the use of periodic BC in order to treat localized processes and reduce the cost.

%Our approach is complementary to uncertainty quantification; the proposed scheme can provide rigorous
%quantitative information about the most sensitive parameters, that could be further checked through  UQ methods.

The work presented here is a part of a more general hierarchical simulation scheme that involves multiple
simulation level and a broad range of length and time scales. \cite{Johnston13,Rissanou14,Harmandaris14} In this work, we apply the above
methodology on stochastic molecular systems as the RER and FIM methods are based on this setting i.e., non-deterministic with random noise.
% The deterministic part of the Langevin equations of motion is discretized
%in time and we chose these type of stochastic dynamics because we are not only interested in the equilibrium
%sampling but the kinetics as well.
As test cases we examine: (a) a benchmark Lennard-Jones (LJ) fluid model, \cite{Frenkel2001,Smit}
and, (b) a detailed all-atom methane ($CH_4$) model. \cite{Dreiding}
The LJ fluid system is the most widely used in molecular simulations model of a simple fluid, whereas
the second one employs more complexity due to the intramolecular bond and angle
potentials in addition to the LJ intermolecular potential. Methane has been also extensively studied over the
years due to the fact that it is in abundance in nature and has environmental impacts as well
as it can be used as fuel being the main component of natural gas.

The proposed pathwise SA method is validated through proper observable quantities upon perturbation of the
potential parameters, which include structural, dynamical and thermodynamic properties of both LJ and methane
model systems. We stress here that the utilized SA method based on RER and the pathwise FIM is independent
of the observable quantities, which is not the case
for derivative-based SA methods where they suffer from smoothness assumptions on the observable functionals.
The partial derivative of an observable is related with the RE through the Pinsker inequality (see ineq. (\ref{Pinsker})).
The Pinsker inequality asserts that small RER (or FIM) values result in small changes  in observable
expectation values under perturbation; thus RER and FIM can serve as a screening tool for specific observables.
The present work provides a detailed quantitative study concerning the relation between the pathwise SA method
(RER / FIM tools) and specific observables of molecular systems. Specifically, it can be inferred that the parameter
that controls the well's position of the LJ potential (i.e., $\sigma_{LJ}$) is two orders of magnitude more sensitive
than the strength of the LJ potential (i.e., $\epsilon_{LJ}$) in terms of RER for the LJ fluid model as well as for the
methane model. This result is also confirmed by the observables. Moreover, we show that $5\%$ perturbation of
$\sigma_{LJ}$ is more crucial in terms of RER (but also supported from the observable functions) than changing
the potential cutoff radius from $4\sigma_{LJ}$ to $1.6\sigma_{LJ}$. Interestingly, the use of RER as an information
criterion to assign values to simulation parameters such as the cutoff radius may result in accelerated MD simulations
with the induced error being controlled by the RER. Finally, the intramolecular parameters (i.e., the parameters of the
bonds) in the methane model are more sensitive in terms of RER than the intermolecular parameters which is
explained by the fact that the bond and angle forces are stiffer than the between-atoms LJ forces.

The organization of the paper is as follows. The following Section describes the path-wise sensitivity analysis
method for Langevin dynamics in detail. In Section~\ref{Models}, the LJ fluid model, the methane model as
well as various observable functions are presented followed by Section~\ref{results_section} where the validation
of the proposed pathwise SA method is demonstrated. Finally, we conclude the paper in Section~\ref{conclusions_section}.

\section{Pathwise Sensitivity Analysis for Langevin Dynamics}
\label{pathwise_SA_meth}
This Section describes and motivates the info-theoretic approach for sensitivity analysis of
stochastic Molecular Dynamics. Particularly, the RER and the corresponding pathwise FIM
are derived for the Langevin equation.

\subsection{Stochastic equation of motion}
Langevin dynamics models a Hamiltonian system which is coupled with a thermostat. \cite{FreeEnergy}
The thermostat serves as a reservoir of energy. In Langevin dynamics, the motion of particles is governed
through a probabilistic framework by a system of stochastic differential equations given by
\begin{equation}
 \left\{
 \begin{array}{l}
	d{q}_t = M^{-1}{p}_t dt \\
	d{p}_t = F^\theta({q}_t) dt - \gamma M^{-1}{p}_t dt + \sigma d{W}_t \ ,
\end{array}  \right.
\label{eq:system}
\end{equation}
where ${q}_t\in \mathbb R^{dN}$ is the position vector of the $N$ particles in $d$-dimensions,
${p}_t\in \mathbb R^{dN}$ is the momentum vector of the particles, $M$ is the (diagonal) mass matrix,
$F^\theta(\cdot):\mathbb R^{dN}\rightarrow\mathbb R^{dN}$ is the driving (conservative) force which depends on
a parameter vector $\theta\in\mathbb R^K$ (e.g. parameters of the specific atomistic force field), $\gamma$ is the friction matrix,
$\sigma$ is the diffusion matrix and ${W}_t$ is a $dN$-dimensional Brownian motion.
In the equilibrium regime, the forces are of gradient form, i.e., $F^\theta({q}_t) = - \nabla V^\theta({q_t})$
where $V^\theta(\cdot)$ is the potential energy.  Moreover, the fluctuation-dissipation
theorem asserts that friction and diffusion terms are related with the inverse temperature $\beta\in\mathbb R$
of the system by
\begin{equation*}
\sigma\sigma^T = 2\beta^{-1}\gamma\, .
\end{equation*}
Under gradient-type forces and the fluctuation dissipation theorem, the Langevin system has a Gibbs
equilibrium (or invariant) distribution, $\mu^\theta(\cdot,\cdot)$, given by
\begin{equation}
\mu^\theta(dq,dp) = \frac{1}{Z}e^{-\beta(V^\theta(q)+\frac{1}{2}p^TM^{-1}p)} dqdp \ .
\label{eq:stat_meas}
\end{equation}
In non-equilibrium steady states, however, the stationary distribution, $\mu^\theta(\cdot,\cdot)$, is generally not known
restricting the sensitivity analysis methods that rely on the explicit knowledge of the steady states.
Though, as we show below, the proposed pathwise sensitivity methodology is not limited to equilibrium
systems and it works equally well in the non-equilibrium steady states regime since it only necessitates
the explicit knowledge of the driving forces (i.e., the local dynamics).
%In the following we examine the sensitivity of the results of the molecular simulation on the model parameter $\theta$ vector.

\subsection{Relative Entropy Rate and Fisher Information Matrix for Langevin Processes}
Let the path space $\mathcal X$ be the set of all trajectories $\{({q}_t, {p}_t) \}_{t=0}^T$
generated by the Langevin equation in the time interval $[0,T]$. Let $Q_{[0,T]}^\theta$ denote the path space distribution,
i.e., the probability to see a particular element of path space, $\mathcal X$, for a specific set of parameters $\theta$.
Consider also a perturbation vector,
$\epsilon_{0} \in \mathbb{R}^{K}$, and denote by $Q_{[0,T]}^{\theta+\epsilon_{0}}$ the path space distribution
of the perturbed process, $(\bar{q}_t, \bar{p}_t)$. The proposed sensitivity analysis approach is based on the quantification of the difference
between the two path space probability distributions by computing the relative entropy (RE) between them. Thus, the
pathwise RE of the unperturbed distribution, $Q_{[0,T]}^{\theta}$, with respect to (w.r.t.) the perturbed distribution,
$Q_{[0,T]}^{\theta+\epsilon_{0}}$, assuming that they are absolutely continuous w.r.t. each other is defined as
\begin{equation}
	\mathcal{R}(Q_{[0,T]}^{\theta}| Q_{[0,T]}^{\theta+\epsilon_{0}})
	:= \int \log{\Bigg( \frac{dQ_{[0,T]}^{\theta}}{dQ_{[0,T]}^{\theta+\epsilon_{0}}} \Bigg)} dQ_{[0,T]}^{\theta} \ ,
	\label{pathwise:RE}
\end{equation}
where $\frac{dQ_{[0,T]}^{\theta}}{dQ_{[0,T]}^{\theta+\epsilon_{0}}}$ is the Radon-Nikodym derivative
and it is well-defined due to the absolute continuity assumption. A key property of RE  is that
$\mathcal{R}(Q_{[0,T]}^{\theta}| Q_{[0,T]}^{\theta+\epsilon_{0}}) \ge 0$ with equality if and only if
$Q_{[0,T]}^{\theta}=Q_{[0,T]}^{\theta+\epsilon_{0}}$,  which  allows us to view relative entropy  as a
``distance" (more precisely a semi-metric) between two probability measures capturing the relative
importance of parameter vector changes. \cite{Liu_Chen} Moreover, from an information theory
perspective, the relative entropy measures {\it loss/change of information}
when $Q_{[0,T]}^{\theta+\epsilon_{0}}$ is considered instead of $Q_{[0,T]}^{\theta}$. \cite{CoverThomas}

The necessary and sufficient conditions of the two path distributions (perturbed and unperturbed)
to be absolutely continuous are provided next.
\begin{assumption} Assume that
\begin{itemize}
\item[(a)] the diffusion matrix, $\sigma$, is invertible, and,
\item[(b)] $\mathbb E_{Q_{[0,T]}^{\theta}} [\exp\big\{\int_0^T |u(q_t,p_t)|^2  \big\}]<\infty$, where
the function $u(\cdot,\cdot):\mathbb R^{2dN}\rightarrow\mathbb R^{2dN}$ is defined
such that for all pairs $(q,p)$ it should hold that {\small
\begin{equation*}
	\left[ \begin{array}{cc} 0 & 0 \\ 0 & \sigma \end{array} \right] u(q,p)
	= \left[ \begin{array}{l}
	 M^{-1}p - M^{-1}p \\
	 F^\theta(q) - \gamma M^{-1} p - (F^{\theta+\epsilon_{0}}(q) - \gamma M^{-1} p)
	\end{array} \right ] \ ,
\end{equation*}
}or, equivalently, {\small
\begin{equation*}
 	\left[ \begin{array}{c} 0  \\ \sigma \end{array} \right] u(q,p)
	= \left[ \begin{array}{l}
	 0 \\	F^\theta(q) - F^{\theta+\epsilon_{0}}(q)
	\end{array} \right]  \ .
\end{equation*}}
\end{itemize}
\label{abs:cont:ass}
\end{assumption}
Notice that such a function, $u(\cdot,\cdot)$, exists due to (a). Furthermore, (a) implies that
the noise is non-degenerate for the momenta.
Then, the RE of the path distribution defined in (\ref{pathwise:RE}) is finite and an explicit formula
can be estimated as the following proposition asserts.

\begin{proposition}\label{propos:RE}
Let Assumption~\ref{abs:cont:ass} holds. Assume also that $({q}_0, {p}_0)\sim\nu^{\theta}$
and $(\bar{q}_0, \bar{p}_0)\sim\nu^{\theta+\epsilon_{0}}$ where $\nu^{\theta}(\cdot,\cdot)$
and $\nu^{\theta+\epsilon_{0}}(\cdot,\cdot)$
are two initial distributions which should be absolutely continuous w.r.t. each other. Then,
\begin{equation}
\begin{aligned}
\mathcal{R}(Q_{[0,T]}^{\theta}| Q_{[0,T]}^{\theta+\epsilon_{0}}) &= \mathcal{R}(\nu^\theta | \nu^{\theta+\epsilon_{0}}) \\
&+ \frac{1}{2} \mathbb E_{Q_{[0,T]}^{\theta}} \Big[ \int_0^T |u(q_t,p_t)|^2 dt \Big]
\label{pathwiseRE:decomp}
\end{aligned}
\end{equation}
\end{proposition}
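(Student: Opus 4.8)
The plan is to derive the identity from Girsanov's theorem on path space, combined with the chain rule (disintegration) for relative entropy that peels off the contribution of the two initial laws.

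First I would disintegrate each path measure over its starting point. Writing $Q_{[0,T]}^{\theta} = \nu^{\theta}\otimes \mathbb{P}_{x}^{\theta}$, where $\mathbb{P}_{x}^{\theta}$ denotes the law of the Langevin trajectory started from $x=(q_0,p_0)$, and likewise for $\theta+\epsilon_{0}$, the product structure makes the Radon--Nikodym derivative factor; taking logarithms and integrating against $Q_{[0,T]}^{\theta}$ gives
\begin{equation*}
\mathcal{R}(Q_{[0,T]}^{\theta}\,|\,Q_{[0,T]}^{\theta+\epsilon_{0}}) = \mathcal{R}(\nu^{\theta}\,|\,\nu^{\theta+\epsilon_{0}}) + \mathbb{E}_{x\sim\nu^{\theta}}\!\left[\,\mathcal{R}\big(\mathbb{P}_{x}^{\theta}\,|\,\mathbb{P}_{x}^{\theta+\epsilon_{0}}\big)\,\right].
\end{equation*}
It then suffices to establish, for a fixed initial point $x$, that $\mathcal{R}(\mathbb{P}_{x}^{\theta}\,|\,\mathbb{P}_{x}^{\theta+\epsilon_{0}}) = \tfrac12\,\mathbb{E}_{\mathbb{P}_{x}^{\theta}}\big[\int_0^T |u(q_t,p_t)|^2\,dt\big]$; averaging over $\nu^{\theta}$ and using the tower property turns the right-hand side into $\tfrac12\,\mathbb{E}_{Q_{[0,T]}^{\theta}}\big[\int_0^T|u|^2\,dt\big]$, which is the asserted formula.

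For the fixed-initial-point statement I would invoke Girsanov. The two Langevin systems share the mass matrix $M$, the friction $\gamma$ and the diffusion $\sigma$ and differ only through the conservative force; hence under both laws the $q$-equation is identical and only the $p$-equation is shifted, by exactly $F^{\theta}(q)-F^{\theta+\epsilon_{0}}(q)=\sigma u(q,p)$. Assumption~\ref{abs:cont:ass}(a) guarantees that this discrepancy lies in the range of $\sigma$ — equivalently, that $u$ exists — which is what makes the two path measures mutually absolutely continuous rather than singular, notwithstanding the degeneracy of the noise (it acts on the momenta only). Taking $\mathbb{P}_{x}^{\theta}$ as the reference measure, with driving Brownian motion $\widetilde W$, Girsanov's theorem yields $d\mathbb{P}_{x}^{\theta+\epsilon_{0}}/d\mathbb{P}_{x}^{\theta} = \exp\!\big(-\!\int_0^T u(q_t,p_t)\cdot d\widetilde W_t - \tfrac12\!\int_0^T|u(q_t,p_t)|^2\,dt\big)$, whence, inverting and taking logarithms, $\log\frac{d\mathbb{P}_{x}^{\theta}}{d\mathbb{P}_{x}^{\theta+\epsilon_{0}}} = \int_0^T u\cdot d\widetilde W_t + \tfrac12\int_0^T|u|^2\,dt$. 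Taking $\mathbb{E}_{\mathbb{P}_{x}^{\theta}}$ annihilates the stochastic-integral term (it is a genuine martingale, hence mean zero) and leaves precisely $\tfrac12\,\mathbb{E}_{\mathbb{P}_{x}^{\theta}}\big[\int_0^T|u|^2\,dt\big]$, which finishes the argument.

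The step I expect to be the main obstacle is making the change of measure rigorous: one must check that the exponential local martingale $\mathcal{E}(-\!\int u\cdot d\widetilde W)$ is a \emph{true} martingale, so that it indeed defines a probability density, the Radon--Nikodym identity above is valid, and the stochastic integral $\int_0^T u\cdot d\widetilde W_t$ is a true $\mathbb{P}_{x}^{\theta}$-martingale with vanishing mean. This is exactly the role of Assumption~\ref{abs:cont:ass}(b): the bound $\mathbb{E}_{Q_{[0,T]}^{\theta}}\big[\exp\{\int_0^T|u(q_t,p_t)|^2\,dt\}\big]<\infty$ implies the Novikov condition $\mathbb{E}_{Q_{[0,T]}^{\theta}}\big[\exp\{\tfrac12\int_0^T|u|^2\,dt\}\big]<\infty$ under the reference measure, which is sufficient. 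The remaining ingredients — the disintegration, the chain rule for relative entropy, and the elementary Itô algebra — are routine once this integrability is in hand.
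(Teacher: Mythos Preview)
Your proof is correct and follows essentially the same route as the paper: Girsanov's theorem supplies the Radon--Nikodym derivative, and the stochastic integral against the reference Brownian motion has zero mean under $Q_{[0,T]}^{\theta}$, leaving the $\tfrac12\int|u|^2$ term. The only cosmetic difference is that you first peel off the initial-condition contribution via the chain rule for relative entropy and then apply Girsanov to the conditional laws, whereas the paper writes the full Radon--Nikodym derivative (including the $d\nu^{\theta}/d\nu^{\theta+\epsilon_0}$ factor) in one shot and then changes from $W$ to $\hat W_t=W_t+\int_0^t u\,ds$ to obtain a $Q_{[0,T]}^{\theta}$-Brownian motion; your $\widetilde W$ is exactly their $\hat W$.
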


\begin{proof}
Under Assumption~\ref{abs:cont:ass}, the Girsanov theorem applies providing an explicit formula of the Radon-Nikodym
derivative\cite{Oksendal} which is given by {\small
\begin{equation*}
\begin{aligned}
&\frac{dQ_{[0,T]}^{\theta}}{dQ_{[0,T]}^{\theta+\epsilon_{0}}}\Big(\big\{(q_t,p_t)\big\}_{t=0}^T \Big)
= \frac{d\nu^\theta}{d\nu^{\theta+\epsilon_{0}}}(q_0,p_0) \times \\
&\exp\left\{-\int_0^T u(q_t,p_t)^TdW_t - \frac{1}{2}\int_0^T |u(q_t,p_t)|^2 dt \right\} \ .
\end{aligned}
\end{equation*}}
Moreover, $\hat{W}_t:=\int_0^t u(q_s,p_s)dt + W_t$ is a Brownian motion with respect to the path distribution
$Q_{[0,T]}^{\theta}$, meaning that, for any measurable function $f(\cdot,\cdot)$, it holds
$\mathbb E_{Q_{[0,T]}^{\theta}} \big[\int_0^T f(q_t,p_t)^Td\hat{W}_t\big] = 0$. Then, {\small
\begin{equation*}
\begin{aligned}
&\mathcal{R}(Q_{[0,T]}^{\theta}| Q_{[0,T]}^{\theta+\epsilon_{0}}) =
\int \left(\log \frac{d\nu^\theta}{d\nu^{\theta+\epsilon_{0}}}(q_0,p_0) \right. \\
&- \left. \int_0^T u(q_t,p_t)^TdW_t - \frac{1}{2}\int_0^T |u(q_t,p_t)|^2 dt \right) dQ_{[0,T]}^{\theta} \\
&= \int \log \frac{d\nu^\theta}{d\nu^{\theta+\epsilon_{0}}}(q_0,p_0) dQ_{[0,T]}^{\theta}
- \int \int_0^T u(q_t,p_t)^Td\hat{W}_t dQ_{[0,T]}^{\theta} \\
&+ \frac{1}{2} \int \int_0^T |u(q_t,p_t)|^2 dt dQ_{[0,T]}^{\theta} \\
&= \mathcal{R}(\nu^\theta | \nu^{\theta+\epsilon_{0}}) + \frac{1}{2} \int \int_0^T |u(q_t,p_t)|^2 dt dQ_{[0,T]}^{\theta}
\end{aligned}
\end{equation*}}
\end{proof}

We remark that this proposition is a result on the transient regime since the initial distributions can be anything
as fas as they are absolutely continuous w.r.t. each other.
In the stationary regime, a significant simplification of the pathwise RE occurs. As the following proposition
asserts, pathwise RE is decomposed into a linear in time term plus a constant where the slope of the linear
term is the relative entropy rate (RER).

\begin{proposition}\label{propos:RER}
Let Assumption~\ref{abs:cont:ass} holds. Assume also that $({q}_0, {p}_0)\sim\mu^{\theta}$
and $(\bar{q}_0, \bar{p}_0)\sim\mu^{\theta+\epsilon_{0}}$ where $\mu^{\theta}(\cdot,\cdot)$
and $\mu^{\theta+\epsilon_{0}}(\cdot,\cdot)$ are the stationary distributions for the unperturbed
and the perturbed process, respectively, which should be absolutely continuous w.r.t. each other.
Then, the pathwise RE equals to
\begin{equation}
\mathcal{R}(Q_{[0,T]}^{\theta}| Q_{[0,T]}^{\theta+\epsilon_{0}}) = T \mathcal{H}(Q^{\theta} | Q^{\theta+\epsilon_0})
+ \mathcal{R}(\mu^\theta | \mu^{\theta+\epsilon_0})
\label{eq:discr_RE_exp}
\end{equation}
where {\small
\begin{equation}
\begin{aligned}
	&\mathcal{H}(Q^{\theta} | Q^{\theta+\epsilon_0}) :=  \\ &\frac{1}{2} \mathbb{E}_{\mu^\theta}
	[ ( F^{\theta+\epsilon_0}({q}) - F^\theta({q}))^T (\sigma\sigma^T)^{-1} ( F^{\theta+\epsilon_0}({q}) - F^\theta({q}))]
	\label{eq:RER_cont}	
\end{aligned}
\end{equation} }
is the Relative Entropy Rate.
\end{proposition}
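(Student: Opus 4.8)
The plan is to specialize Proposition~\ref{propos:RE} to stationary initial data and then collapse the path-space time integral using the invariance of $\mu^\theta$.

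First I would take $\nu^\theta=\mu^\theta$ and $\nu^{\theta+\epsilon_0}=\mu^{\theta+\epsilon_0}$ in the decomposition (\ref{pathwiseRE:decomp}); this is allowed since the two stationary measures are assumed mutually absolutely continuous. The first term on the right-hand side of (\ref{pathwiseRE:decomp}) then becomes the constant $\mathcal{R}(\mu^\theta|\mu^{\theta+\epsilon_0})$ appearing in (\ref{eq:discr_RE_exp}), and the whole problem reduces to evaluating $\tfrac{1}{2}\,\mathbb E_{Q_{[0,T]}^\theta}\big[\int_0^T |u(q_t,p_t)|^2\,dt\big]$.

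Next I would make the function $u$ explicit. Using the second (equivalent) form of its defining relation in Assumption~\ref{abs:cont:ass} together with invertibility of $\sigma$, one gets $u(q,p)=\sigma^{-1}\big(F^\theta(q)-F^{\theta+\epsilon_0}(q)\big)$, which in particular does not depend on $p$. Squaring and using $(\sigma^{-1})^{T}\sigma^{-1}=(\sigma\sigma^T)^{-1}$, and that the quadratic form is invariant under an overall sign change of its argument, gives
\begin{equation*}
|u(q,p)|^2=\big(F^{\theta+\epsilon_0}(q)-F^\theta(q)\big)^T(\sigma\sigma^T)^{-1}\big(F^{\theta+\epsilon_0}(q)-F^\theta(q)\big),
\end{equation*}
i.e. precisely twice the integrand of $\mathcal{H}(Q^\theta|Q^{\theta+\epsilon_0})$ in (\ref{eq:RER_cont}).

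Finally I would interchange expectation and $dt$-integration (Fubini, justified by the integrability in Assumption~\ref{abs:cont:ass}(b)) and invoke stationarity: since $(q_0,p_0)\sim\mu^\theta$ and $\mu^\theta$ is invariant for the unperturbed Langevin flow, the marginal law of $(q_t,p_t)$ under $Q_{[0,T]}^\theta$ is $\mu^\theta$ for every $t\in[0,T]$, so $\mathbb E_{Q_{[0,T]}^\theta}[|u(q_t,p_t)|^2]$ equals the constant $2\,\mathcal{H}(Q^\theta|Q^{\theta+\epsilon_0})$, independently of $t$. The $dt$-integral then contributes a factor $T$, yielding $T\,\mathcal{H}(Q^\theta|Q^{\theta+\epsilon_0})$, which combined with the first step gives (\ref{eq:discr_RE_exp}). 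The only point that really matters here is stationarity: it is what turns the transient time integral of Proposition~\ref{propos:RE} into an expression that is exactly linear in $T$; everything else — solving for $u$, the Fubini exchange, and recognizing the quadratic form — is routine given Assumption~\ref{abs:cont:ass}.
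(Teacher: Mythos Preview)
Your proposal is correct and follows essentially the same route as the paper: specialize Proposition~\ref{propos:RE} to stationary initial data, swap the $dt$-integral and the expectation, and use invariance of $\mu^\theta$ to turn the time integral into the factor $T$. You are merely more explicit than the paper in writing $u(q,p)=\sigma^{-1}\big(F^\theta(q)-F^{\theta+\epsilon_0}(q)\big)$ and in identifying $|u|^2$ with the quadratic form in (\ref{eq:RER_cont}); the paper leaves this implicit and just records the chain of equalities.
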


\begin{proof}
First notice that we drop the $T$ subscript from the definition of RER because RER is time-independent.
Then, it is straightforward to show from the previous proposition that {\small
\begin{equation*}
\begin{aligned}
&\mathcal{R}(Q_{[0,T]}^{\theta}| Q_{[0,T]}^{\theta+\epsilon_{0}}) \\
&= \mathcal{R}(\mu^\theta | \mu^{\theta+\epsilon_{0}}) + \frac{1}{2} \int \int_0^T |u(q_t,p_t)|^2 dt dQ_{[0,T]}^{\theta} \\
&=  \mathcal{R}(\mu^\theta | \mu^{\theta+\epsilon_{0}}) + \frac{1}{2} \int_0^T \int |u(q_t,p_t)|^2 dQ_{[0,T]}^{\theta} dt \\
&=  \mathcal{R}(\mu^\theta | \mu^{\theta+\epsilon_{0}}) + \frac{1}{2} \int_0^T \int |u(q,p)|^2 \mu^{\theta}(dq,dp) dt \\
&= \mathcal{R}(\mu^\theta | \mu^{\theta+\epsilon_{0}}) + \frac{T}{2} \mathbb{E}_{\mu^\theta}[ |u(q,p)|^2 ] \ .
\end{aligned}
\end{equation*}}
\end{proof}
RER inherits all the properties of relative entropy (non-negativity, convexity, etc.) and it
measures the change of information in path space per unit time. For large times, the term that involves RER is the significant
term, since it scales linearly with time, while the constant one becomes less and less important.
Moreover, the estimation of RER necessitates only the knowledge of the driving forces (i.e., the local
dynamics) which is available since the driving forces are computed in any numerical scheme of the
Langevin equation.

\medskip
\noindent
{\bf Pathwise Fisher information matrix}:
Generally, RE is locally a quadratic functional in a neighborhood of parameter vector, $\theta$.
Under smoothness assumption in the parameter vector, the curvature
of the RE around $\theta$, defined by its Hessian, is the FIM. Analogously, we define the Hessian
of the RER to be the pathwise FIM denoted by $F_{\mathcal{H}}(Q^{\theta})$. The relation between
the RER and the pathwise FIM is
\begin{equation}
	\mathcal{H}(Q^{\theta}| Q^{\theta+\epsilon_{0}}) = \frac{1}{2}\epsilon_{0}^{T}F_{\mathcal{H}}(Q^{\theta})\epsilon_{0} + \mathcal{O}(|\epsilon_{0}|^3) \ .
\label{eq:RE_FIM_expansion}	
\end{equation}
Under smoothness assumption of the force vector, $F^{\theta}(\cdot)$, w.r.t. to the parameter vector,
$\theta$, an explicit formula for the pathwise FIM for the Langevin process is straightforwardly obtained from
(\ref{eq:RER_cont}) given by
\begin{equation}
	{F}_{\mathcal H}(Q) = \mathbb{E}_{\mu^\theta}[\nabla_{\theta}F^{\theta}(q)^T(\sigma\sigma^T)^{-1}\nabla_{\theta}F^{\theta}(q)] \ ,
	\label{eq:FIM_cont}
\end{equation}
where $\nabla_{\theta}F^{\theta}(\cdot)$ is a $dN\times K$ matrix containing all the first-order partial
derivatives of the force vector (i.e., the Jacobian matrix). Observe that the pathwise FIM does not
depend on the perturbation vector, $\epsilon_0$, making pathwise FIM an attractive ``gradient-free''
quantity for sensitivity analysis. Indeed, the RER for any perturbation can be recovered up to third-order
utilizing only the pathwise FIM and (\ref{eq:RE_FIM_expansion}). Moreover, the spectral analysis
of ${F}_{\mathcal H}(Q)$ would allow to identify which parameter directions are most/least sensitive
to perturbations.

%Another entropy-based tool for sensitivity analysis is the Fisher Information Matrix (FIM). Fisher information is defined as the expected value of the observed information or the variance of the score. The score is the gradient of the log-likelihood wrt $\theta$ and it indicates how sensitive the likelihood function $L(\theta;X)$ is on $\theta$. Let $f(X,\theta)$ be the probability density for a random variable $X$ (and likelihood function for $\theta$). Under certain regularity conditions, the first moment of the score is 0 and the second moment
%\begin{equation}
%	\mathbb{E}\Bigg[ \Bigg( \frac{\partial}{\partial \theta} log(X;\theta) \Bigg)^2 \Bigg| \theta \Bigg] = \int \Bigg( \frac{\partial}{\partial \theta}\log f(x;\theta) \Bigg)^2  f(x;\theta) dx
%\end{equation}
%is the Fisher information.

\medskip
\noindent
{\bf Example 1: Unknown stationary distribution}:
In many molecular systems the steady state is not  a Gibbs distribution and typically it is not known explicitly. 
This is commonplace in  non-equilibrium molecular systems such as models  with multiple mechanisms, e.g.  reaction-diffusion systems,  or driven molecular systems. \cite{Harman_non_eq, VH_non_rev_shear2000}
Here we consider such a  mathematically simple  example, where we assume that the force field consists of two components; one conservative term given as minus the gradient
of the potential energy and another term that is not the gradient of a potential function. Mathematically, the
force field is given by
\begin{equation*}
F^\theta(q) = -\nabla V^\theta(q) + G(q)
\end{equation*}
where we further assume for simplicity that only the conservative term depends on the parameter vector, $\theta$.
Since, the resulting Langevin process is at the non-equilibrium regime, the steady states do not admit an explicit
form. However, denoting by $\bar{\mu}^\theta$ the unknown stationary distribution of the Langevin process driven
by the above forces, the RER is given by {\small
\begin{equation}
\begin{aligned}
	& \mathcal{H}(Q^{\theta} | Q^{\theta+\epsilon_0}) =  \\ &\frac{1}{2} \mathbb{E}_{\bar{\mu}^\theta}
	[ ( \nabla V^{\theta+\epsilon_0}({q}) - \nabla V^\theta({q}))^T (\sigma\sigma^T)^{-1} ( \nabla V^{\theta+\epsilon_0}({q}) - \nabla V^\theta({q}))] \ .
\end{aligned}
\end{equation}}
Notice that the expression in the expectation does not depend on the non-conservative forces and it is
the same expression as in the equilibrium regime. However, the dependence on the non-conservative
forces is evident through the (unknown) stationary distribution, $\bar{\mu}^\theta$.

\medskip
\noindent
{\bf Example 2: Inverse temperature perturbation}:
Using the fluctuation-dissipation relation, we can substitute the friction parameter $\gamma$ with the
inverse temperature $\beta$ and compute the RER and the pathwise FIM for $\beta$ perturbations.
Indeed, substituting in eq. (\ref{eq:RER_cont}) the relation $\gamma=\frac{1}{2}\beta\sigma\sigma^T$, we
are looking for $u(\cdot,\cdot)$ such that {\small
\begin{equation*}
 	\left[ \begin{array}{cc} 0 & 0  \\ 0 & \sigma \end{array} \right] u(q,p)
	= \left[ \begin{array}{l}
	 0 \\ -\frac{1}{2}\beta \sigma\sigma^T M^{-1}p + \frac{1}{2}(\beta+\epsilon_\beta) \sigma\sigma^T M^{-1}p
	\end{array} \right]  \ ,
\end{equation*}}
where $\epsilon_\beta$ is the perturbation of inverse temperature. Notice that the forces were cancelled out
in this expression for $u$ because no perturbation is performed in the parameters of the forces. At the stationary
regime, RER is then given by
\begin{equation}
\mathcal{H}(Q^{\beta} | Q^{\beta+\epsilon_\beta}) = \frac{\epsilon_\beta^2}{8} \mathbb{E}_{\mu^\beta} [p^TM^{-1}\sigma\sigma^TM^{-1}p] \ ,
\label{RER:inv:temp}
\end{equation}
where $\mu^\beta(\cdot)$ is the stationary distribution of the process. It is evident that RER is a quadratic function
of the perturbation of the inverse temperature and interestingly enough it depends only on the momenta, $p$. The
above formula is valid for any force field and implies that the sensitivity of the (inverse) temperature as quantified by the
relative entropy between path distributions is independent of the underlying system as it defined by the forces or by the
potential function, $V^\theta(\cdot)$.

Furthermore, in the equilibrium regime where the stationary distribution
is given by the Gibbs measure (eq. (\ref{eq:stat_meas})), (\ref{RER:inv:temp}) can be further simplified because of the
Gaussian nature of the momenta, $p$. Indeed, assuming
for simplicity that $M=m I_{dN}$ and $\sigma=\sigma I_{dN}$ with $m, \sigma \in \mathbb R$, (\ref{RER:inv:temp})
is rewritten as
\begin{equation}
\mathcal{H}(Q^{\beta} | Q^{\beta+\epsilon_\beta}) = \frac{\epsilon_\beta^2\sigma^2}{8\beta m} dN \ .
\end{equation}
Consequently, the pathwise FIM in the logarithmic scale (see equation below) is given by
\begin{equation}
F_{\mathcal{H}}(Q^{\log{\beta}}) =  \frac{\gamma}{2m} dN \ .
\end{equation}

\medskip
\noindent
{\bf SA in the logarithmic scale}:
In many molecular systems, the model parameters may differ by orders of magnitude, thus, it is more appropriate to perform
relative perturbations, i.e., the $i$-th element of the perturbation vector is $\theta_i\epsilon_{0,i}$. After straightforward
algebra, the elements of the logarithmic-scale Fisher information matrix are given by {\small
\begin{equation}
	( F_{\mathcal{H}}(Q^{\log{\theta}}) )_{i,j} = \theta_{i} \theta_{j} ( F_{\mathcal{H}}(Q^{\theta}) )_{i,j}\ , \quad i,j=1,\dots,K \ .
\end{equation}}
We refer to Ref.~\onlinecite{PantazisRER} for more details.

\medskip
\noindent
{\bf Statistical estimators}:
Even though the Langevin equation is degenerate since the noise applies only to the momenta, the
process is hypo-elliptic and ergodic under mild conditions on the potential energy, $V(\cdot)$.
Therefore, RER and the corresponding pathwise FIM can be computed as ergodic averages.
Note though that in order to obtain samples from the Langevin process, a numerical scheme
should be employed resulting in errors due to the discretization procedure. There exist several
numerical integrators such as BBK and BAOAB for the Langevin equation. \cite{FreeEnergy,Leimkuhler:13}
In Appendix~\ref{apdx:discrete}, BBK integrator is briefly reviewed. The inserted
bias is of order $O(\Delta t)$ where $\Delta t$ is the time-step as it has been shown for Langevin
equation under compactness condition \cite{Bally:96b, Bally:96a, Mattingly:10} (e.g., under
bounded domain). Then, the statistical estimator for the RER is given by {\small
\begin{equation}
\begin{aligned}
\bar{\mathcal H}(Q^{\theta} | Q^{\theta+\epsilon_0}) =
&\frac{1}{2n} \sum_{i=1}^{n} \big(F^{\theta+\epsilon_0}(q^{(i)})-F^{\theta}(q^{(i)})\big)^T \\
& (\sigma\sigma^T)^{-1}
\big(F^{\theta+\epsilon_0}(q^{(i)})-F^{\theta}(q^{(i)})\big) \ ,
\end{aligned}
\label{eq:H_cont}
\end{equation}}
where $n$ is the number of samples and, similarly, the statistical estimator for the pathwise FIM
is given by {\small
\begin{equation}
\bar{F}_{\mathcal{H}}(Q^{\theta}) = \frac{1}{n} \sum_{i=1}^{n}
\nabla_{\theta}F^{\theta}(q^{(i)})^{T}(\sigma\sigma^T)^{-1}\nabla_{\theta}F^{\theta}(q^{(i)}) \ .
\label{eq:F_cont}
\end{equation}}

\medskip
\noindent
{\bf Sensitivity Bound}:
Relative entropy provides a mathematically elegant and computationally tractable  methodology for the
parameter sensitivity analysis of Langevin systems. Such an approach focuses on the sensitivity of the entire
probability distribution, either at equilibrium or at the path-space level, i.e., for the entire stationary
time-series quantifying among others the transferability skills of the molecular models.
However, in many situations in molecular simulations, the interest is focused on
observables such as radial distribution function, pressure, mean square displacement, etc.
Therefore, it is desirable to attempt to connect the parameter sensitivities of  observables to the relative entropy
methods proposed here. Indeed, relative entropy can provide an upper bound for a large family
of observable functions, $g$, through the Pinsker (or Csiszar-Kullback-Pinsker) inequality, \cite{CoverThomas}{\small
\begin{equation}
	|\mathbb{E}_{Q_{[0,T]}^{\theta + \epsilon_{0}}}[g] - \mathbb{E}_{Q_{[0,T]}^{\theta}}[g]|
	\leq ||g||_{\infty} \sqrt{2\mathcal{R}(Q_{[0,T]}^{\theta}|Q_{[0,T]}^{\theta+\epsilon_{0}})}
	\label{Pinsker}
\end{equation}}
where $||\cdot||_\infty$ denotes  the supremum (here, maximum) of $g$.
In the context of sensitivity analysis, inequality (\ref{Pinsker}) states that if the relative entropy is small, i.e., insensitive in a particular parameter  direction, then, any  bounded observable $g$ is also expected to be  insensitive towards the same direction.
In this sense, ineq. (\ref{Pinsker}) can be viewed as a screening tool for parametric "insensitivity analysis" of observables. Sensitivity bounds sharper than ineq. (\ref{Pinsker}) are also being developed. \cite{DKPP:2014}
Note that the inverse is not justified; i.e. if the relative entropy is large for a specific parameter direction, then an observable $g$ might, or might not, exhibit sensitivity with respect to the same parameter direction.

\section{Models and Observables}
\label{Models}
This section describes the two molecular models discussed here and several observable functions on which the proposed sensitivity
analysis method is validated. A prototypical Lennard-Jones fluid model with two force field parameters and a methane
model with ten parameters are presented. Observables such as the radial distribution function, the mean
square displacement and the pressure spanning from a wide range of model properties are also provided.
All simulations are performed under constant number of atoms, volume and temperature (NVT ensemble).

%Every model at hand has a domain of applicability. For instance, the forcefield representation allows to calculate
%(usually thermodynamic) properties of interest in accordance to experimental values within a margin of
%error. This means that a forcefield might represent one property but may not be valid for others, or might
%represent all of them less accurately. Sensitivity analysis helps us focus on the parameters that affect this
%optimization procedure. [[MOVE THIS PARAGRAPH EITHER TO INTRO OR TO RESULTS SECTION]]

\subsection{LJ fluid model}
In order to investigate the sensitivity analysis for a realistic system we examine the LJ fluid model.
In this model, the atoms are identical, interacting with the Lennard Jones potential with reduced
non-dimensional parameters $\epsilon_{LJ}=1, \sigma_{LJ}=1$. One of the advantages of the LJ
fluid is that there exists a phase diagram of the reduced density $\rho^{*}$ versus the reduced
temperature $T^{*}$. \cite{Smit} The popularity of this model relies on the generality of systems
of molecular liquids that can be described 
%due to the reduced units (infinitely many combinations of simulation
%$\rho, \tau, \epsilon_{LJ}, \sigma_{LJ}$ correspond to the same state as well as accuracy and efficiency
%from a computational point of view).
as well as computational efficiency.
 We restrict the force field interactions in the vicinity of cutoff radius $r_{cut}$.
Thus, the (truncated) LJ pair potential is given by
\begin{equation}
  V_{LJ}(r_{ij})=\begin{cases}
    4\epsilon_{LJ} \Big[ \Big(\frac{\sigma_{LJ}}{r_{ij}}\Big)^{12} - \Big(\frac{\sigma_{LJ}}{r_{ij}}\Big)^{6}  \Big] \ , & \text{if $r_{ij}<r_{cut}$}\\
    0 \ ,& \text{otherwise},
  \end{cases}
  \label{LJ_potential}
\end{equation}
while the total potential energy of the system is
\begin{equation*}
	V_{LJ}({q})  = \sum_{\overset{1\leq i,j \leq N}{i < j}} V_{LJ}(r_{ij}) \ ,
\end{equation*}
with
\begin{equation*}
	r_{ij} = |q_i - q_j| = \sqrt{ (q_{i}^{x}-q_{j}^{x})^2 + (q_{i}^{y}-q_{j}^{y})^2 +(q_{i}^{z}-q_{j}^{z})^2} \ ,
\end{equation*}
being the Euclidean distance between the atoms.

Sensitivity analysis is performed on the LJ potential parameters $\epsilon_{LJ}$ and $\sigma_{LJ}$ and
as we show later (see section~\ref{results_section}), the most sensitive parameter is the latter.
% as expected, as a consequence of the analytical formula of the potential and the fixed volume $Vol$.
We consider a system of $N=2048$ atoms in a cubic simulation box of side length $L = 14.3\sigma_{LJ}$
with periodic boundary conditions (PBC). The reduced temperature of the run is $T^*=0.85 \tau$ which means
that the system is in liquid phase (number density $\rho^*=0.7$). For the numerical scheme, the time-step
is $\Delta t=10^{-3}$ while the length of the run is $10^5$ time-steps. An equilibration period of $10^4$ steps
is sufficient for the fcc lattice to melt and standard reduced units are used throughout the simulations.

\subsection{$CH_4$ model}
\label{methane:subsect}
Methane is a more complicated molecule combined of two different types of atoms; carbon (C)
and hydrogen (H). Active research is targeted on $CH_4$ because of its environmental impact and
energy utilization. \cite{methane_environment} Our sensitivity study is expanded and validated on this more complex
molecular model which consists of different intermolecular potentials between the pairs of atoms (bonded
and non-bonded) as well as additional parameters imposed by the geometry of the molecule (bonds
and angles).  We define $V({q})$ the total potential and $N$ the total number of atoms (both C's and H's).
\begin{equation}
	V({q}) = V_{bond}({q}) + V_{angle}({q}) + V_{LJ}({q}) \ .
\end{equation}

where $V_{bond}({q}), V_{angle}({q})$ are quadratic intramolecular potential functions of the bonds and angles respectively. $V_{LJ}({q})$ is the non-bonded potential as defined in the previous subsection.
For more details concerning the model see Appendix C.
\begin{figure}
	\begin{center}
	%\frame{\includegraphics[height=20em]{./figures/2meth_white.png}}
	\includegraphics[height=10em]{./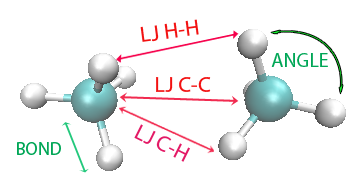}
	\caption{Visualization of the $CH_{4}$ interactions. The site to site non-bonded LJ interactions (intermolecular)
	              are marked in red whereas the intramolecular potential interactions are marked in green.}
	\label{fig:2meth}
 	\end{center}
\end{figure}

The parameter vector, $\theta$, consists of six LJ parameters
(three different LJ potentials depending on the atom type, see also Figure~\ref{fig:2meth}), two bond parameters and
two angle parameters. %The most sensitive perturbation directions involve primarily the $C-H$ bond length.
The parameters values of $CH_4$ are summarized in Table \ref{tab:meth_params} whereas the values of the
simulation parameters are presented in Table \ref{tab:sim_params}.
\begin{table}[!hbt]
\centering
\begin{tabular}{ | l || c | r | r | }
  \hline
   & $\epsilon_{LJ} \tiny{[\frac{Kcal}{mol} ]}$ & $\sigma_{LJ}$ \tiny{[\AA]} & $r_{cut}$ \tiny{[\AA]}\\
  \hline
 $C-C$ & 0.0951 & 3.473 & 15.0  \\
 $C-H$ & 0.0380 & 3.159 & 15.0  \\
 $H-H$ & 0.0152 & 2.846 & 15.0\\
 \hline
\end{tabular}
\begin{tabular}{ |c|c|c|c|}
\hline
 $K_{b}$ \tiny{[$\frac{Kcal}{mol \AA^{2}}$]} & $r_0$ \tiny{[$\AA$]}  & $K_{\theta}$ \tiny{[$\frac{Kcal}{mol \cdot deg^{2}}$]} & $\theta_0$ \tiny{[rad]} \\
 \hline
  700 & 1.1 & 100  & 1.909  \\
  \hline
\end{tabular}
\caption{Non-bonded $LJ$ coefficients as well as bond and angle coefficients for methane. \cite{Dreiding}}
\label{tab:meth_params}
\end{table}

\begin{table}[!hbt]
\centering
\begin{tabular}{ | c | c | c | c | c | c | c|}
  \hline
  N(molecules) & T [K] & L [\AA] & $\rho$ [$\frac{moles}{\AA^3}$]  & $\gamma$  \\
  \hline
 512 & 100 & 32.9 & 0.0143  & 0.5  \\
  \hline
\end{tabular}
\caption{Simulation parameters for $CH_{4}$ }
\label{tab:sim_params}
\end{table}

\subsection{Observables}
\label{sec:observ}
To validate the proposed pathwise SA approach we have calculated various observables that are related to thermodynamical, structural and dynamical properties of the molecular stochastic models. These quantities are experimentally tractable and are related to the microscopic as well as the macroscopic level.
In more detail, here, we focus on radial distribution function (RDF), mean
square displacement (MSD) and pressure.
Other studies \cite{Iordanov} in the literature computed observables such as the Helmholtz
free energy, density, enthalpy to name some. Despite the fact that the RDF as well as the pressure are equilibrium
quantities, MSD is related to the dynamics (time-series averaging) making the proposed pathwise method suitable
for such long-time quantities.
%  All the sensitive parameters indicated by the methods
%we used also indicate a greater effect on the observables at study. Inequality \ref{Pinsker} states that if the
%relative entropy is small, i.e., insensitive in a particular parameter direction, then any bounded observable
%$f$ is also expected to be insensitive towards the same direction \cite{PantazisBIO}. [[NOT HERE]]
Note also, that there are no closed analytic expressions for all the above observables with respect to the force field (model) parameters.

\subsubsection{Radial distribution function}
The structure of liquids is characterized by the pair radial distribution function, $g(r)$, ($g^{(2)}(r)$ to
be more precise) and it is the most important observable of molecular simulations due to the fact that the
ensemble average of any pair function may be expressed by it. \cite{CompSim,McQuarrie} Furthermore, $g(r)$
can be calculated experimentally by X-ray diffraction. \cite{McQuarrie}
The RDF is the pair distribution function that indicates the normalized distribution of a pair of identical atoms
(or molecules) at a given distance. For long intermolecular distance $r$ in liquids, $g(r)$ fluctuates around unity. This static observable is based on the equilibrium structure of the system
and it is constructed by histogram averages. For $N$ identical atoms let the two-atom distribution  function be
\begin{equation}
	P_{N}^{(2)}(q_{1}, q_{2}) := \frac{1}{(N-2)!} \int e^{-\beta V(\bf{q})} dq_{3}\dots dq_{N} \ ,
\end{equation}
where $q_{1}, q_{2}$ are the positions of the first and second atoms kept fixed, irrespective of the configuration of the rest of the particles. For a (homogeneous) liquid, it holds that
\begin{equation}
	P_{N}^{(2)} = \rho^{2} g^{(2)}(|r_{1,2}|),\quad \rho = \frac{N}{\text{Vol}} \
\end{equation}
where $\rho$ is the number density while $Vol$ is the volume of the simulation box. If the atoms were independent of each other, $P^{(2)}$ would equal $\rho^{2}$ so in practice $g(r)$ corrects for the spatial (density) correlation between atoms.
For the $CH_{4}$ model, we consider the molecular $g(r)$ which is based on the center of mass of each individual molecules.

\subsubsection{Mean square displacement}
The mean square displacement associates the diffusion coefficient, $D$, with the atom (or center of
mass for molecules) coordinates and is a measure of the spatial extent of random motion of the Langevin
dynamics. It is defined as
\begin{equation}
	MSD = \langle(q_t-q_{t_0})^2 \rangle = \mathbb E_{Q_{[t_0,t]}}\big[(q_t-q_{t_0})^2\big]
\end{equation}
where $q_{t}, q_{t_0}$ are vectors of particle positions at time $t$ and reference time instant $t_0$, while the brackets,
$\langle\cdot,\cdot\rangle$, denote ensemble averaging over all configurations of all the atoms (or
molecules). This quantity provides us with information about the dynamical properties of the system.
The MSD and the diffusion coefficient, $D$, are related by Einstein's equation
\begin{equation}
	2D = \frac{1}{d} \lim_{t \to \infty} \frac{\partial \langle (q_t-q_{t_0})^2 \rangle}{\partial t }
\label{eq:diff_coeff}	
\end{equation}
%\begin{equation}
%	\langle q^2(t) \rangle = \Bigg \langle \Bigg( \int_{0}^{t} v(t') dt' \Bigg)^2 \Bigg\rangle = 2 \int_{0}^{t}\int_{0}^{t'} \langle v(t')v(t'') \rangle dt' dt''
%\end{equation}
Where $d$ is the dimension of the system (here $d=3$).
%We explore the time-averaging effect by varying the
%length of the averaging window while we set $t_{0}$ to ???? every $5$ consecutive time-steps.
%Upon monitoring the velocity autocorrelation of the systems, we set the multiple time origin to 5 and 100 timesteps for the LJ fluid and $CH_{4}$ models respectively. This value depends on the choice of $\delta t$, system size and the phase space state.

\subsubsection{Pressure}
Temperature and pressure are macroscopic thermodynamic parameters defined in an experimental
setup but they can also be defined microscopically. Pressure is given by the expression \cite{CompSim}
\begin{equation*}
	P = \frac{\rho}{\beta} + \frac{vir}{\text{Vol}} \ ,
\end{equation*}
where the first term is the kinetic energy contribution while $vir$ is the atomic (or molecular) virial given by
\begin{equation*}
	vir  = \frac{1}{3} \sum_{1\le i \le N}\sum_{j>i} F_{ij}r_{ij} \ .
\end{equation*}
Note that $F_{ij}$ is the total force (both non-bonded and bonded in the $CH_4$ case) between atoms (or molecules)
$i$ and $j$.

%\newpage
\section{Results}
\label{results_section}
Every model at hand has a domain of applicability; i.e., the forcefield representation
allows to calculate (usually thermodynamic) properties of interest in accordance to experimental values
within a margin of error. This means that a force field might represent
well one property, such as density, but may not be valid for others, or might represent all of them less
accurately.
In the following we perform simulations where the RER and FIM for each perturbed variable are computed. Discussion on the results as well as validation with respect to the observable quantities defined in section \ref{sec:observ} supports our results.

\subsection{LJ fluid}
%\label{LJ_fluid_subsection}

%RER and FIM calculations for the LJ fluid are summarized in Figure \ref{fig:FIMRER_LJ_500}. We compare the RER value using the discrete time estimators \ref{H1}, \ref{F2} from the Appendix and the middle bar corresponds to the FIM-based RER (eq.\ref{eq:RE_FIM_expansion} when $T \to \infty$) whereas the left and right bars are the values of estimator \ref{H1} for a negative and positive perturbation by $\epsilon_{0}=5\%$ respectively.
%The perturbation in the figures is in logscale. The errorbars (variance) of the RER estimator is larger than the one corresponding to FIM, necessitating more samples for accurate estimation. All the plots are normalized upon division with the number of particles and simulations of bigger systems under the same parameters produce the same results. As the figure suggests $\sigma_{LJ}$ is the most sensitive parameter.
RER and FIM calculations for the LJ fluid are summarized in Figure \ref{fig:RER_new_formula}. We compare the RER value using the continuous time statistical estimators, Eqs. \ref{eq:H_cont}, \ref{eq:F_cont}. The middle bar corresponds to the FIM-based RER whereas the left and right bars are the values of estimator \ref{eq:H_cont} for a negative and positive perturbation by $\epsilon_{0}=5\%$ respectively. All the plots are normalized upon division with the number of particles.
As the figure suggests $\sigma_{LJ}$ is the most sensitive parameter.
Systems size effects have been thoroughly examined by performing test simulations of bigger systems under the same parameters, which produce similar results to those presented here. It has been shown for a similar model that uncertainty in thermodynamic and
transport properties based on the potential parameters is larger than statistical simulation uncertainty. \cite{Pernot}\\

The corresponding results for the discrete time case using the BBK integrator are shown in the Appendix. There's minor discrepancy of order $O(\Delta t)$ as previously mentioned in section ~\ref{pathwise_SA_meth} due to the discretization error bias. We note here that the continuous time computations are faster since the RER and FIM formulas are less complex.

\begin{figure} [h]
\centerline{\includegraphics[height=20em]{./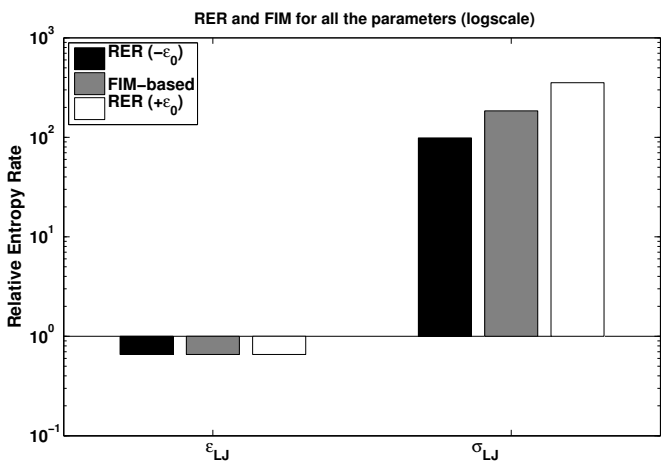}}
\caption{RER and FIM of continuous time estimators (\ref{eq:H_cont}), (\ref{eq:F_cont}). Comparing with the discrete time case (supplementary material), the values are almost identical. $\sigma_{LJ}$ is the most sensitive parameter.}
\label{fig:RER_new_formula}
\end{figure}

\begin{figure} [h]
	%\begin{center}
	\centerline{\includegraphics[height=19em]{./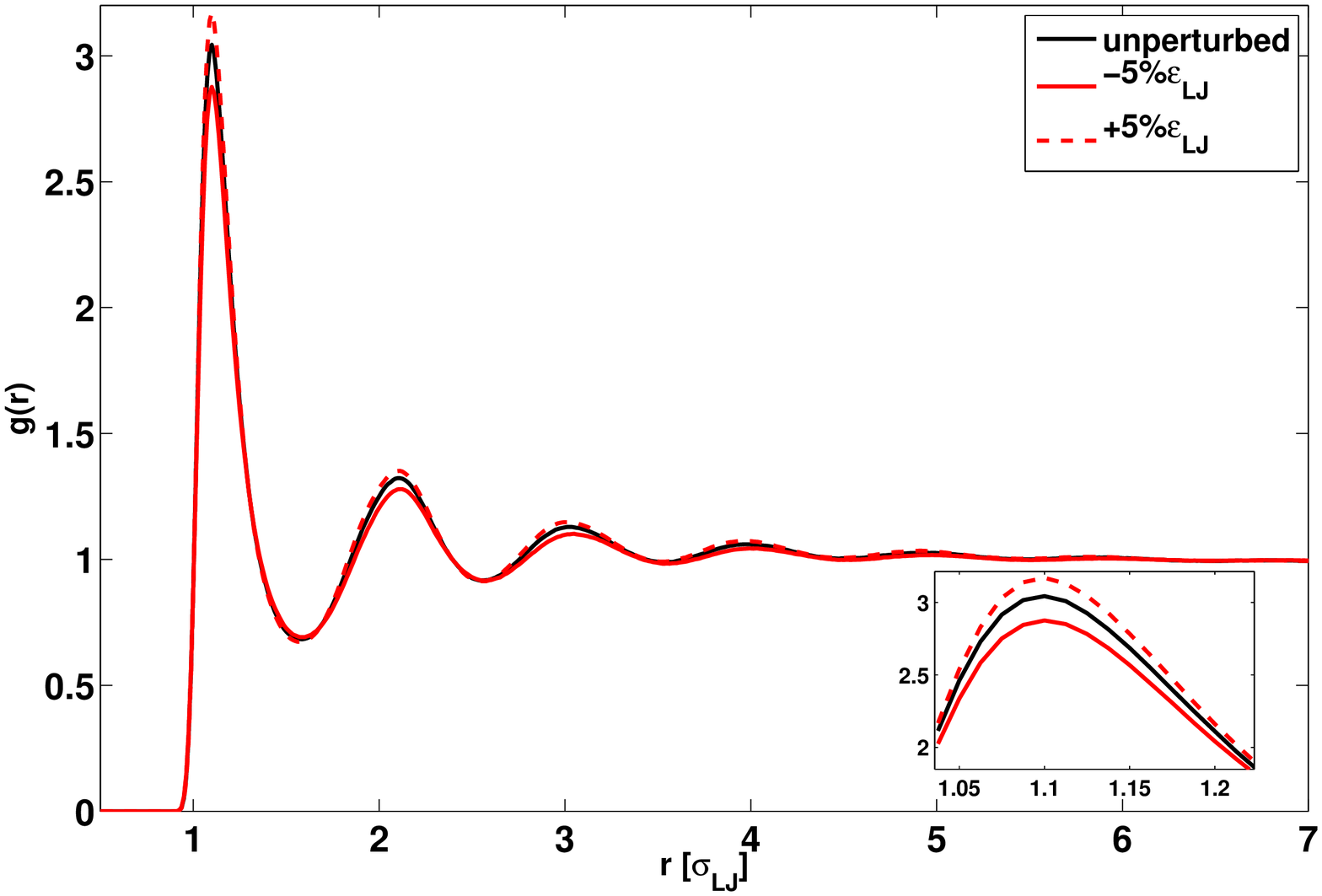}}
	\centerline{\includegraphics[height=19em]{./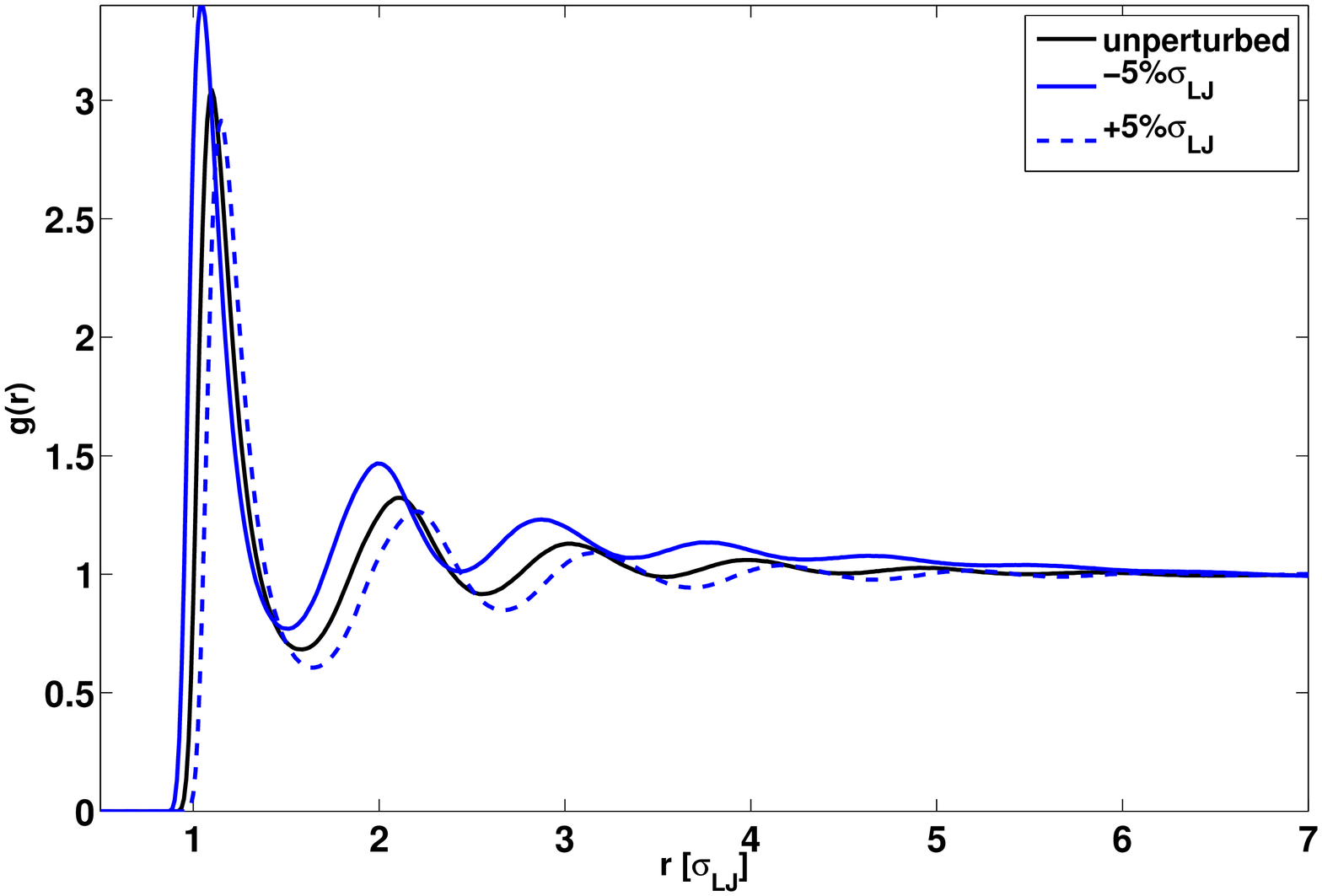}}	
	\caption{Effect of perturbation of $\epsilon_{LJ}$ parameter by $\pm5\%$ (upper panel) and $\sigma_{LJ}$ parameter by $\pm5\%$ (lower panel) on RDF. The first peak is shifted vertically for fluctuations around $\epsilon_{LJ}$ whereas it is shifted to the right or left when the fluctuations concern the $\sigma_{LJ}$ parameter. It is clear that $\sigma_{LJ}$ (lower pannel) is more sensitive as the plots differ substantially, which is in agreement with the RE method.}
	\label{fig:rdf_sig_eps_2}
 	%\end{center}
\end{figure}

\begin{table}[hb]
\centering
\begin{tabular}{ | l || c | c| c| }
  \hline
   perturbation & $ || g^{\theta}(r) - g^{\theta+\epsilon_{0}}(r) ||_{L_{2}}$ & $\frac{|g^{\theta}| - |g^{\theta+\epsilon_{0}}|}{|g^{\theta}|} $& RER \\
  \hline
 $+5\%\epsilon_{LJ}$ & 0.049 &  0.8 \% & 0.79 \\
$-5\%\epsilon_{LJ}$ &   0.066 & -1.17\% & 0.79 \\
 $+5\%\sigma_{LJ}$ & 0.47 &-3.83 \% & 409 \\
 $-5\%\sigma_{LJ}$  &   0.59 & 7.4 \% & 115 \\
 \hline
 $ r_{cut}=1.6\sigma_{LJ}$ & 0.189 & -3.44 \% & 0.71 \\
  $ r_{cut}=7\sigma_{LJ}$   & 0.01 & 0.19\% & $1.6\times 10^{-4}$ \\
  \hline
\end{tabular}
\caption{$L_{2}$ norm of the difference of the unperturbed minus the perturbed g(r) and normalized area difference.}
\label{tab:L2_rdf_LJ}
\end{table}

Validation on the stronger sensitivity on $\sigma_{LJ}$ compared to $\epsilon_{LJ}$, is demonstrated by the RDF ($g(r)$) plots shown in Figure \ref{fig:rdf_sig_eps_2}.
Note that the gradient of the potential, i.e. the interatomic force, depends linearly with respect to $\epsilon_{LJ}$. An increase in this parameter leads to a deeper potential well %(see also Figure \ref{fig:LJ_variations} in Appendix)
and stronger attraction between the atoms at the same distance. Thus, as expected, positive perturbation in $\epsilon_{LJ}$ leads to an increase of the first peak in the RDF graph. In addition, only the first peak of the $g(r)$ is affected, the rest of the curve remains the same.
Positive (negative) perturbations on the $\sigma_{LJ}$ parameter shift the whole RDF graph due to the fact that the atoms sense greater (weaker) repulsion forces. Hence, the distribution maximum is transferred to a longer (shorter) distance. We also notice that the peak of the curve has increased at the new maximum which can be explained by the finite volume of the same simulation box (NVT ensemble) of the unperturbed system.

In order to get a more detailed insight on the RDFs shown in Figure \ref{fig:rdf_sig_eps_2} we have also computed the $L_{2}$ norm, shown in Table \ref{tab:L2_rdf_LJ}. The $L_{2}$ norm is suitable for a comparison of the unperturbed versus the perturbed plots $g^{\theta}(r)$ and $g^{\theta+\epsilon_{0}}(r)$ respectively. As the RER/FIM computations have shown that there is a relative entropy difference of about 2 orders of magnitude with respect to the two potential parameters (Figure \ref{fig:RER_new_formula}), we now observe a consistent difference, of about 5 times in $L_{2}$, for the RDF observable. Table \ref{tab:L2_rdf_LJ} and Figure \ref{fig:rdf_sig_eps_2} suggest that the positively and negatively perturbed RDF plots exhibit a more symmetric behavior on the $\epsilon_{LJ}$ parameter than the $\sigma_{LJ}$. Moreover $-5\%\sigma_{LJ}$ changes the packing of the LJ fluid completely; all the density distribution peaks are moved to a shorter distance. This result is consistent with Pinsker's inequality (\ref{Pinsker}) as the more sensitive direction allows for greater differences in the expected values of the observables.

Opposite perturbation directions yield different RER values whereas this is not the case for the FIM based RER which is a second-order (quadratic) approximation. In one of the realizations in our example, RER for $+(-)5\%\sigma_{LJ}$ is 360.7 (101.1) and FIM is 196.6 meaning that $\mathcal{H}(Q^{\theta}|Q^{\theta \pm \epsilon_{0}})$ is not symmetric w.r.t. FIM and the negative direction being more sensitive.

There is no analytic formula that relates the MSD to the potential parameters but we expect that a larger deviation will result upon perturbation of a more sensitive parameter. As we can see in Figure \ref{fig:LJ_fl_msd_5perc}, the line for the insensitive perturbed parameter $\epsilon_{LJ}$ slightly differs from the black one, both for positive and negative $\epsilon_{0}$. On the contrary, the line that corresponds to the increased $\sigma_{LJ}$ is further away and under the unperturbed one. Based on the aforementioned discussion on $g(r)$ this is reasonable, as an increase in the $\sigma_{LJ}$ values leads to stronger repulsive forces at the same distance, hence more atom collisions and consequantely to a larger friction coefficient, i.e. lower mobility of the LJ atoms. A decrease in $\sigma_{LJ}$ lowers the interatomic repulsive forces and there's no significant effect at this density because the random forcing dominates the dynamics. This result is consistent with Pinsker's inequality as it provides an upper bound only, meaning that although this parameter is indicated as more sensitive (bigger RER value on the r.h.s.) the expected value w.r.t. this observable slightly changes upon perturbation. Additional runs (realizations of the Markov chain starting from different configurations) for the same negative $\sigma_{LJ}$ direction have shown that the errorbars are within $2.5\%$. The linear dependence of the interatomic forces with respect to $\epsilon_{LJ}$ accounts for increased (decreased) interatomic interaction strength when this parameter is changed upwards (downwards).

\begin{figure}
	%\begin{center}
	%\hspace{600ex}
	\centerline{\includegraphics[height=19em]{./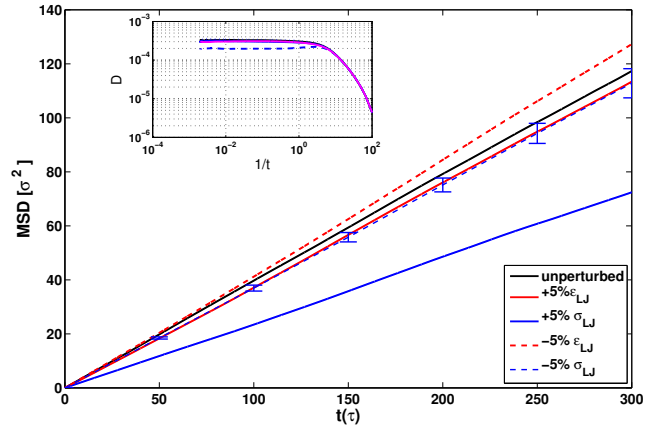}}
	\caption{MSD for different perturbed directions by $\pm5\%$. The $\epsilon_{LJ}$ parameter has a small impact in comparison with the more sensitive $\sigma_{LJ}$. The MSD plot for the positive perturbation of $\sigma_{LJ}$ stands out as the increased collisions dominate the random forcing. Errorbars indicate the standard deviation for $-5\%\sigma_{LJ}$ and the deviation propagates with time. The inset illustrates the diffusion coefficient difference in logscale.}
	\label{fig:LJ_fl_msd_5perc}
 	%\end{center}
\end{figure}

%\begin{table}[hb]
%\centering
%\begin{tabular}{ | l || c|}
%  \hline
%   perturbation &  $ D [\frac{\sigma^{2}_{LJ}}{\tau}]$  \\
%  \hline
%  $unperturbed $  & $3.223 \times 10^{-4}$ \\
% $+5\%\epsilon_{LJ}$  & $2.979 \times 10^{-4}$  \\
% $-5\%\epsilon_{LJ}$  & $ 3.492 \times 10^{-4}$  \\
% $+5\%\sigma_{LJ}$  & $\bf{1.895 \times 10^{-4}}$  \\
%  $-5\%\sigma_{LJ}$   & $ 3.042 \times 10^{-4}$  \\
%
%  \hline
%\end{tabular}
%
%\caption{ Diffusion coefficient of the MSD plots. The errorbars are within $\pm2.5\%$. }
%\label{tab:LJ_MSD_slope_D}
%\end{table}

Table \ref{tab:pressure_and_D_LJ} contains the diffusion coefficient $D$ (from eq. ~\ref{eq:diff_coeff}) related to Figure \ref{fig:LJ_fl_msd_5perc} and depicts a quantitative aspect. The perturbation direction of $+5\%\sigma_{LJ}$ is dominant and clearly results in slowing down the diffusion of the LJ fluid particles.

% PRESSURE
As mentioned above for simulation of the LJ fluid standard non-dimensional (reduced units) are used. The reduced pressure is denoted by $P^{*}$ and Table \ref{tab:pressure_and_D_LJ} contains the simulation results. Once more we observe a greater influence in perturbation of the parameter $\sigma_{LJ}$ especially for a positive increase. This is consistent with the fact that the volume remains unchanged and the repulsive forces increase as discussed earlier, giving a pressure rise of fourteen times more for a $+5\%$ perturbation. We observe the opposite fact for a reduction in $\sigma_{LJ}$. The $\epsilon_{LJ}$ parameter has a more symmetric influence and this is explained by the linear increase in the forces (derivative of the potential formula) between the atoms and consequently via the virial coefficient it is depicted at the pressure. The third column of Table \ref{tab:pressure_and_D_LJ} compares the relative pressure change with respect to the unperturbed run and the pressure standard deviation is on the column that follows.  Parameter $\sigma_{LJ}$ alters the pressure by an order of magnitude, a result which is consistent with the RER/FIM calculations in the previous subsection.

\begin{table}[hb]
\centering
\begin{tabular}{ | l || c | c |c ||| c| }
  \hline
   perturbation & $ P^{*}$ & $\frac{P^{*}_{\theta+\epsilon_{0}}-P^{*}_{\theta}}{|P^{*}_{\theta}|}$ & $\sigma_{STD}$ &  $ D [\frac{\sigma^{2}_{LJ}}{\tau}]$\\
  \hline
  unperturbed  &  0.11  & -   & 0.25 & $3.2 \times 10^{-4}$\\
 $+5\%\epsilon_{LJ}$ & -0.10  & -1.92 & 0.26 & $2.9 \times 10^{-4}$  \\
 $-5\%\epsilon_{LJ}$ & 0.28 & 1.56 & 0.23  & $ 3.4 \times 10^{-4}$ \\
 $+5\%\sigma_{LJ}$ &   $\bf{1.71}$  &  $\bf{14.34}$ & 0.6  & $\bf{1.8 \times 10^{-4}}$ \\
  $-5\%\sigma_{LJ}$  &   -0.47 & -5.24 & 0.12 & $ 3.04 \times 10^{-4}$ \\
  \hline
\end{tabular}
\caption{(left)Pressure change with respect to different perturbation directions of the LJ fluid parameters. $ \sigma_{LJ}$ is the most sensitive direction. (right)Diffusion coefficient of the MSD plots. The errorbars are within $\pm2.5\%$.}
\label{tab:pressure_and_D_LJ}
\end{table}

\subsubsection{Discontinuous model parameter: cutoff radius}
The $r_{cut}$ is a parameter of the model but the potential is not differentiable with respect to it.
Hence we can compute the relative entropy rate but we cannot have an estimate of the Fisher Information matrix because the computation of FIM involves products of partial derivatives (see also Eq. \ref{eq:F_cont}).
Figure \ref{fig:LJ_fl_RER_rcut} summarizes the quantity $\mathcal{R}({r_{cut}}^{ref}| r_{cut})$ per particle, where in this notation we mean that the RER integral differential is w.r.t. the path space measure corresponding to the model's $r_{cut}$ as reference. The potential tends to zero at distance $r_{cut}=2.5\sigma_{LJ}$, that is a typical value also used in the literature, so information lost upon trimming the potential tail is small in comparison to that when $r_{cut}$ is shifted to the left. We expected that the RER should be higher for a negative perturbation of $r_{cut}$ as validated in Figure \ref{fig:LJ_fl_RER_rcut} and the asymmetry (exponential form for negative perturbation) comes from the formula (plot) of the potential; more information regarding the attractive part is lost rapidly for a $-10\%$ reduction step from the reference $r_{cut}=4\sigma_{LJ}$. Indeed the trick of the $r_{cut}$ convention has been used in molecular simulations in order to reduce the computations at the expense of minimal information loss, so our results using this pseudo-metric indicate that our choice of $r_{cut}$ is suitable.
Additional runs for an increase in $r_{cut}$ suggest a trivial gain of information based on the $\mathcal{H}(Q^{r_{cut}^{ref}}| Q^{r_{cut}})$ value as well as the RDF (see next).\\

\begin{figure} [h]
	%\begin{center}
	\centerline{\includegraphics[height=19em]{./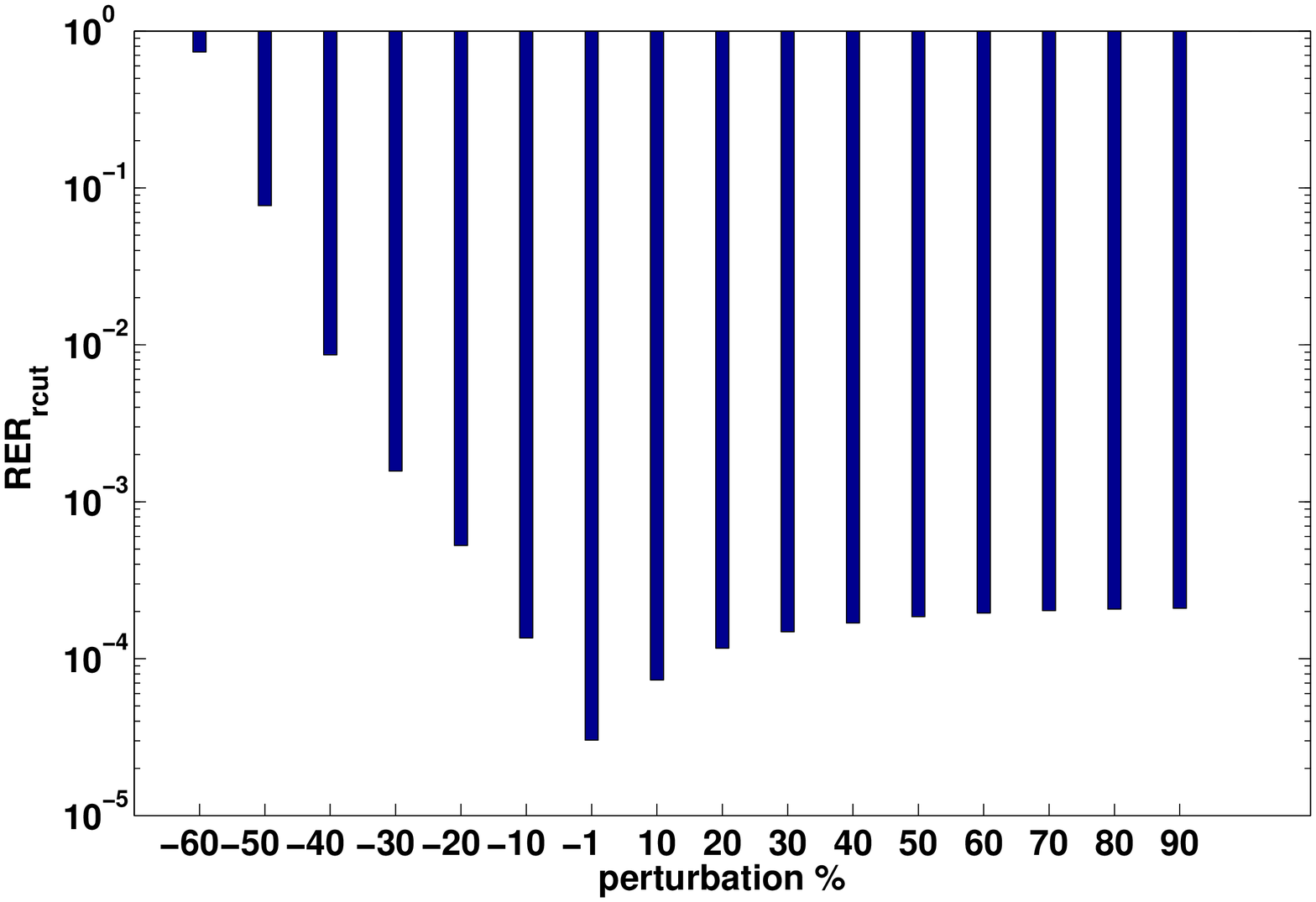}}
	\caption{RER per particle for different $r_{cut}$ values in logscale. Perturbation of -10\% corresponds to the 90\% of the reference $r_{cut} = 4\sigma_{LJ}$. There is significant loss of information when we restrict the potential tail ($r_{cut}$) to less than one half, as that value is near the minimum of the potential well and a fraction of the attractive forces is lost. }
	\label{fig:LJ_fl_RER_rcut}
 	%\end{center}
\end{figure}

The RDF plot changes with a change in $r_{cut}$ as shown in Figure~\ref{fig:LJ_fl_rdf_diff_rcut}. When the potential tail is restricted up to $r_{cut}$, the long-range attractive part is zero after that distance. This results to weaker long-range attractive forces (loss of cohesive energy) hence the first peak in the RDF graph is lower and the mass is distributed to the right. We have included the plot of a $60\%$ decrease to illustrate the higher dependence on a ``premature'' truncation and a plot of $75\%$ increase for comparison. The empirical value of $2.5\sigma_{LJ}$ is adequate for simulations, but a further reduction to 1.6 $\sigma_{LJ}$ results to huge loss of information, especially for the attractive part. On the contrary, if we almost double the reference value of 4$\sigma_{LJ}$ to $7\sigma_{LJ}$, the gain is minimal and this can be seen in Figures \ref{fig:LJ_fl_RER_rcut} and \ref{fig:LJ_fl_rdf_diff_rcut}.

We have seen here that the influence of this parameter is minimal in comparison with the potential parameters in Figure \ref{fig:RER_new_formula} for this reference value in terms of RER. RER per parcticle for a $-5\%\epsilon_{LJ}$ pertrubation is similar to a $-60\%$ redution in $r_{cut}$. The $L_{2}$ norm of the g(r) difference for different $r_{cut}$ values (Table \ref{tab:L2_rdf_LJ} and Figure \ref{fig:LJ_fl_rdf_diff_rcut} ) illustrate the same behavior too. At this point we should stress that the sensitivity of the observables on $r_{cut}$  changes if we choose another reference value;  however in practice usually $r_{cut}$ is not one of the parameters tuned during the force field development/optimization.

\begin{figure} [h]
	%\begin{center}
	\centerline{\includegraphics[height=20em]{./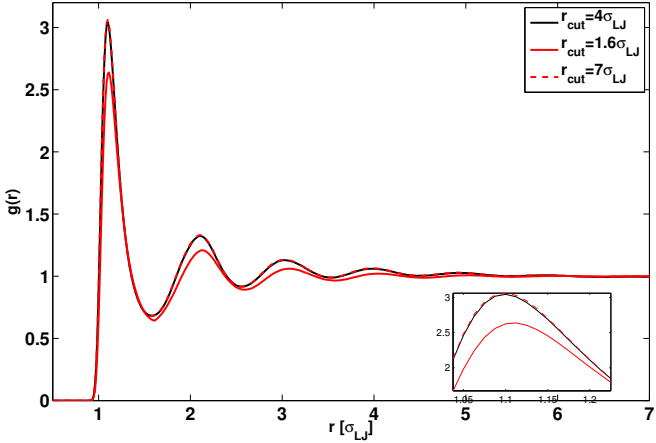}}
	\caption{LJ fluid $g(r)$ for different $r_{cut}$ values. Bigger $r_{cut}$ results to longer range attractive forces binding the atoms closer (higher peak). As the $L_{2}$ norm quantifies, the influence of almost double $r_{cut}$ value ($4\sigma_{LJ}$ is considered as reference), slightly affects the plot and Pinsker inequality validates this fact. On the other hand, a decrease of this parameter leads to loss of information and the corresponding $g(r)$ describes a completely different model. Note that for this plot we increased the system size as the simulation box dimensions restrict the maximum value of $r_{cut}$. }
	\label{fig:LJ_fl_rdf_diff_rcut}
 	%\end{center}
\end{figure}

\subsection*{Non-equilibrium regime LJ fluid}
Finally, we have also studied a non-reversible LJ fluid. In more detail, we have checked the effect of an additional
non-gradient term in the force in the y-direction i.e. $F^\theta(q) = -\nabla V^\theta(q) - G(q), G(q)=[0,\alpha,0,0,\alpha,0,...,0,\alpha,0]^T$. $\alpha=1$ for
the irreversible case and the term $G(q)$ is divergence-free. The eigenvalues and dominant eigenvectors are
summarized in Table \ref{tab:non_rev} and the corresponding RDF plot is given for comparison in Figure~\ref{fig:non_rev}. 
We expected that despite the fact that this process has a different measure close to the stationary measure of the
reversible one, the extra non-gradient term cancels out in eq. (\ref{eq:RER_cont}). Hence our results as expected are
similar but we have demonstrated that the method is general and can be used for a process equipped with a steady
state measure. We aim to the study of more complex systems in non-equilibrium \cite{VH_non_rev_shear2000} in future work.

\begin{table}
\centering
\begin{tabular}{ | l | c || l | c | }
  \hline
   $\alpha = 0$&  & $\alpha = 1$ & \\
   eigenvalues  & eigenvector & eigenvalues &  eigenvector \\
  \hline
  9.434 $\times 10^{4}$  & 0.062 & 1.012  $\times 10^{5}$ & 0.0621\\
  $4.33 \times 10^{11}$  & 0.998 & 4.48 $\times 10^{11}$ & 0.998 \\
  \hline
\end{tabular}
\caption{ eigenvalues and eigenvectors for the non-reversible case}
\label{tab:non_rev}
\end{table}

\begin{figure} [h]
	\begin{center}
	\includegraphics[height=19em]{./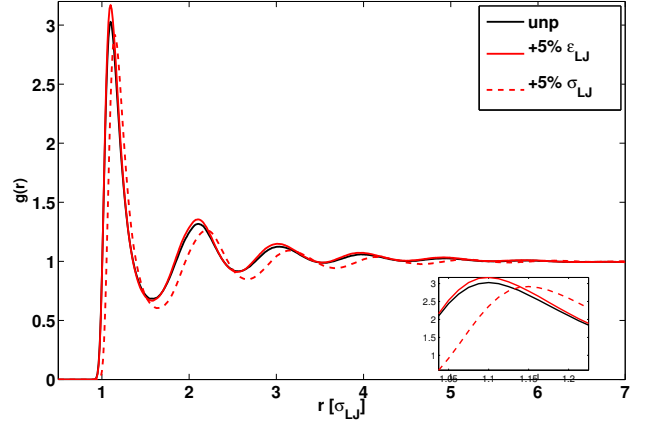}
	\caption{RDF plot for the irreversible case. The norm values are:
	$||g^{\theta +5\%\epsilon_{LJ}}(r) -g^{\theta}(r) ||_{L^{2}}$=0.058, $||g^{\theta +5\%\sigma_{LJ}}(r) -g^{\theta}(r) ||_{L^{2}}$=0.473}
	\label{fig:non_rev}
 	\end{center}
\end{figure}

%===============================

\subsection{$CH_4$}
\label{CH_4_subsection}

In the following we discuss calculations of RER-FIM as well as various observables for the all-atom methane liquid.
FIM and RER calculations are summarized in Figures \ref{fig:CH4_all_params_RER_FIM} and \ref{fig:CH4_RER_logscale}. In more detail, Figure \ref{fig:CH4_all_params_RER_FIM} shows that the RER values vary orders of magnitude for the various parameter perturbations, hence we grouped them in four panels a)-d). In Figure \ref{fig:CH4_RER_logscale} the FIM-based RER data is plotted in logscale for comparison. We note here that due to the uneven number of the different pairs of $C-C$, $C-H$, $H-H$ we have divided with 8 and 16 the quantities corresponding to the second and third type of pairs in order to obtain comparable plots. All RER values are normalized with the number of corresponding interactions. Furthermore, bigger systems consisting of 4000 molecules conclude with identical results.  \\
As in the LJ paradigm, we can see a greater sensitivity on the $\sigma_{LJ}$ parameters instead of the corresponding $\epsilon_{LJ}$ ones. The errorbars indicate that the variance of the estimators were small and that a positive perturbation increased the value of the RER with respect to the FIM-based RER estimate. Clearly the most sensitive parameter is the $C-H$ bond length $r_{0}$ followed by the bending angle $\theta_{0}$. \\ % Similar sensitivity is deduced from simulations on the observable quantities in the next section.\\
The fact that $r_{0}$ and $\theta_{0}$ are more sensitive is not surprising if we consider that the type of all harmonic potentials is very steep. $K_{b}, K_{\theta}$ constants are of the order $\mathcal{O}(10^{2}-10^{-3})$ as 
obtained from more detailed (ab initio) calculations or from fittings of experimental data (see Table \ref{tab:sim_params}). These constants are part of the $\nabla V_{bond}, \nabla V_{angle}$ which is contained in the estimators.
The asymmetry in the $\sigma_{LJ}$ RER values in comparison to the FIM values (panel b in Fig \ref{fig:CH4_all_params_RER_FIM}) is explained by the third order term contribution in the expansion of RER. A rigorous calculation in Appendix B shows that this term includes  the Hessian of the gradient of the potential w.r.t. the parameters and is non-zero for $\sigma_{LJ}$. \\

\subsubsection{Observables}

We perform the same observable computations as with the LJ fluid model in order to validate the predicted sensitivity of the parameters provided by the RER and pathwise FIM methods. Although we have performed simulations for various values of the parameters, we chose $5\%$ as a suitable value for better representation of our results.
Note that in principe parameter sensitivities change as we change phase space point; in higher temperatures or low densities each observable is affected differently and our proposed RE method incorporates this behavior through the force differences (eq. \ref{eq:RER_cont}). Here we have performed simulations in the temperature range from 80 to 180 K and qualitatively similar results were observed. A more detailed study of SA over various temperatures of more complex (macromolecular) systems will be the subject of a future work.

\begin{figure} %[h]
	%\begin{center}
	%\includegraphics[height=16em]{./figures/CH4_all_params_RER_FIM_5perc.eps}
	\centerline{\includegraphics[height=17em]{./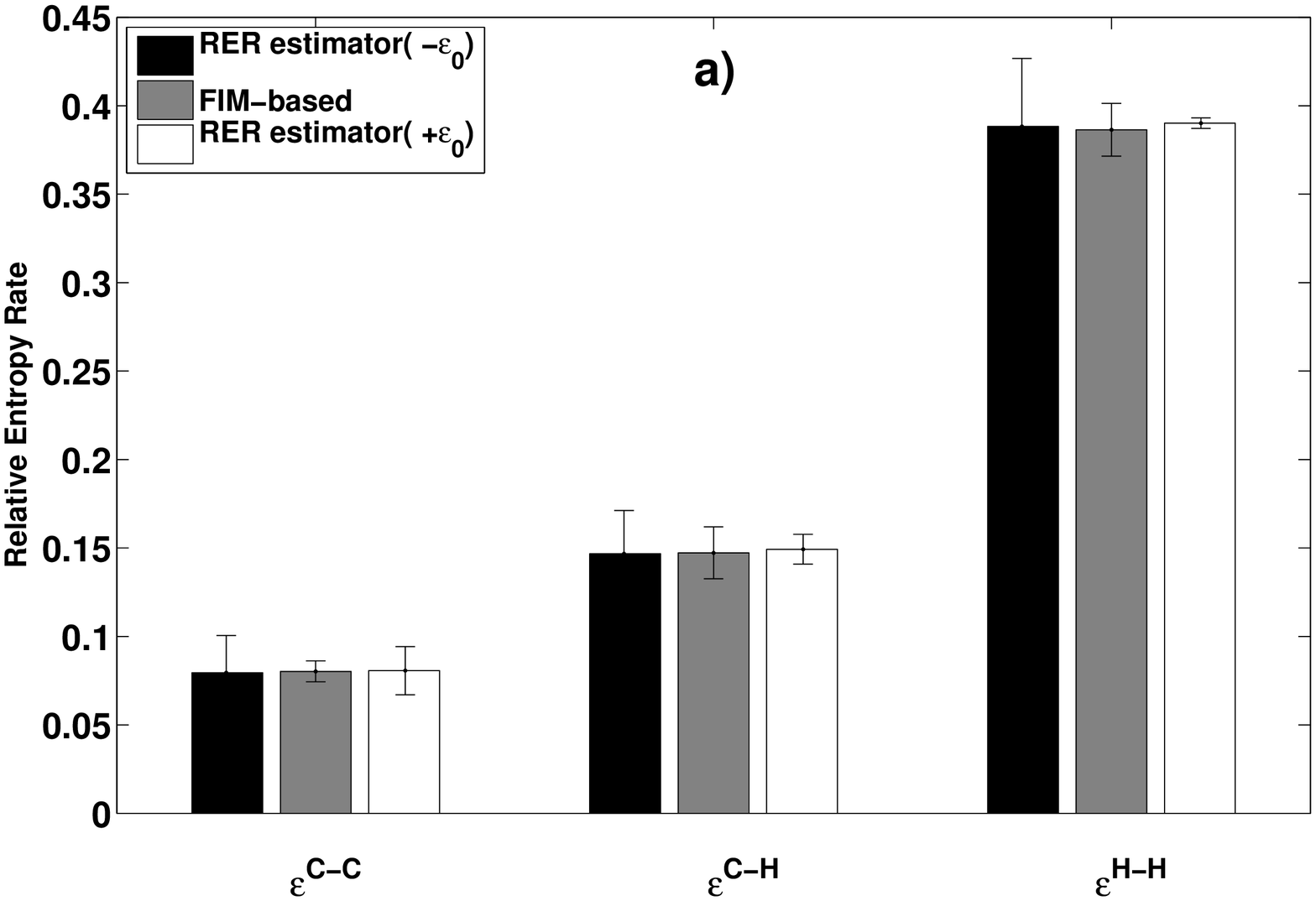}}
	\centerline{\includegraphics[height=17em]{./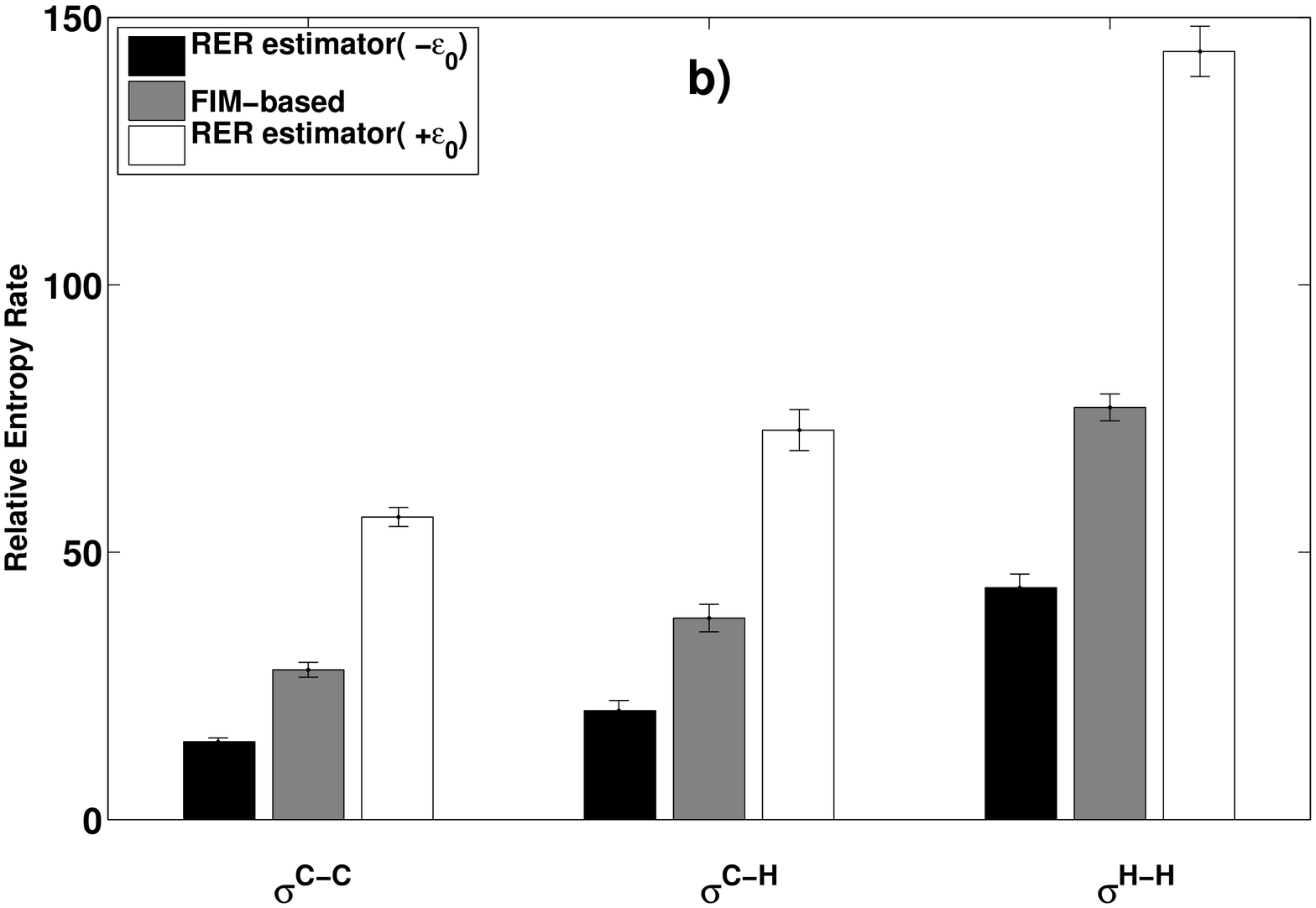}}
	\centerline{\includegraphics[height=17em]{./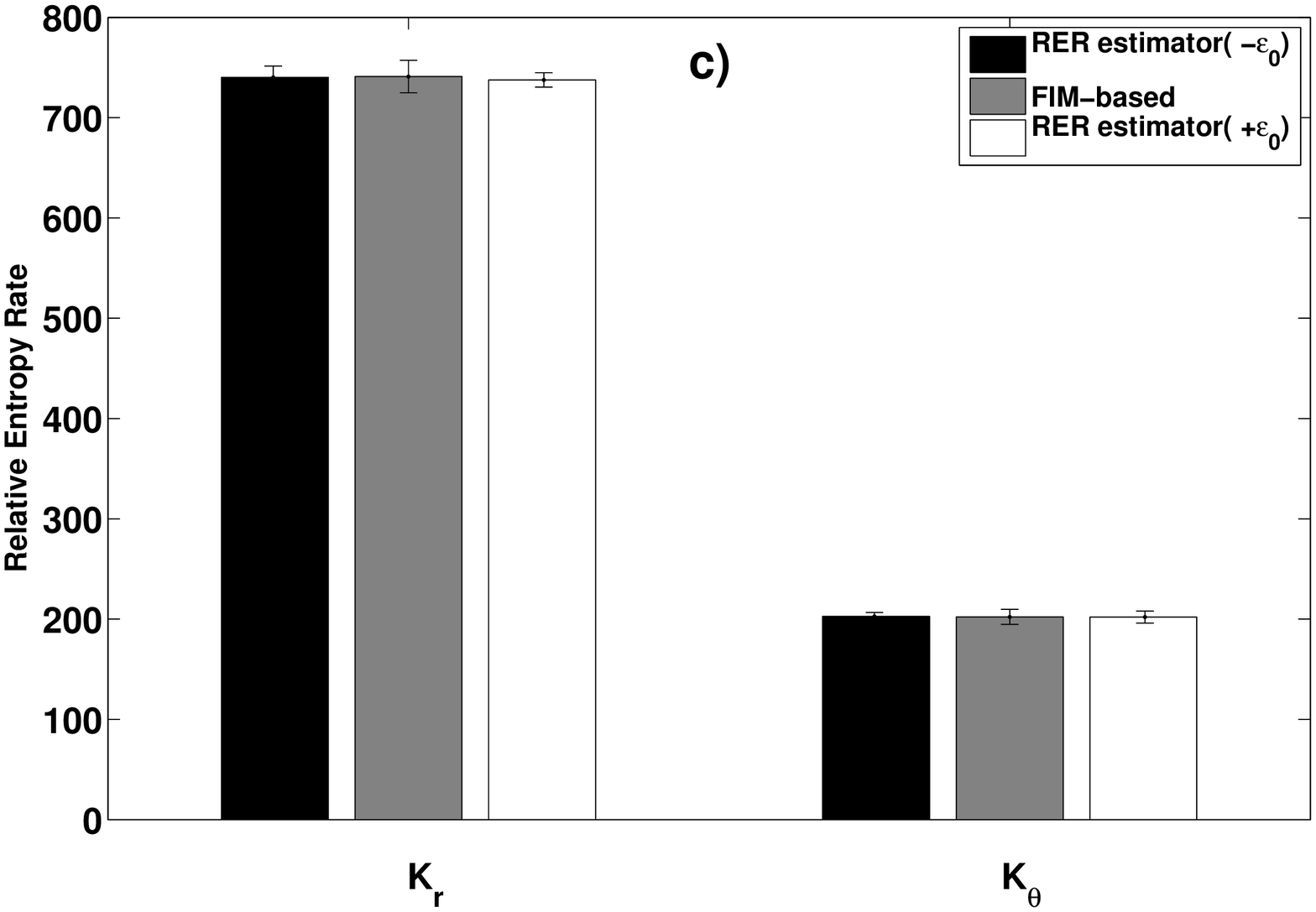}}
	\centerline{\includegraphics[height=17em]{./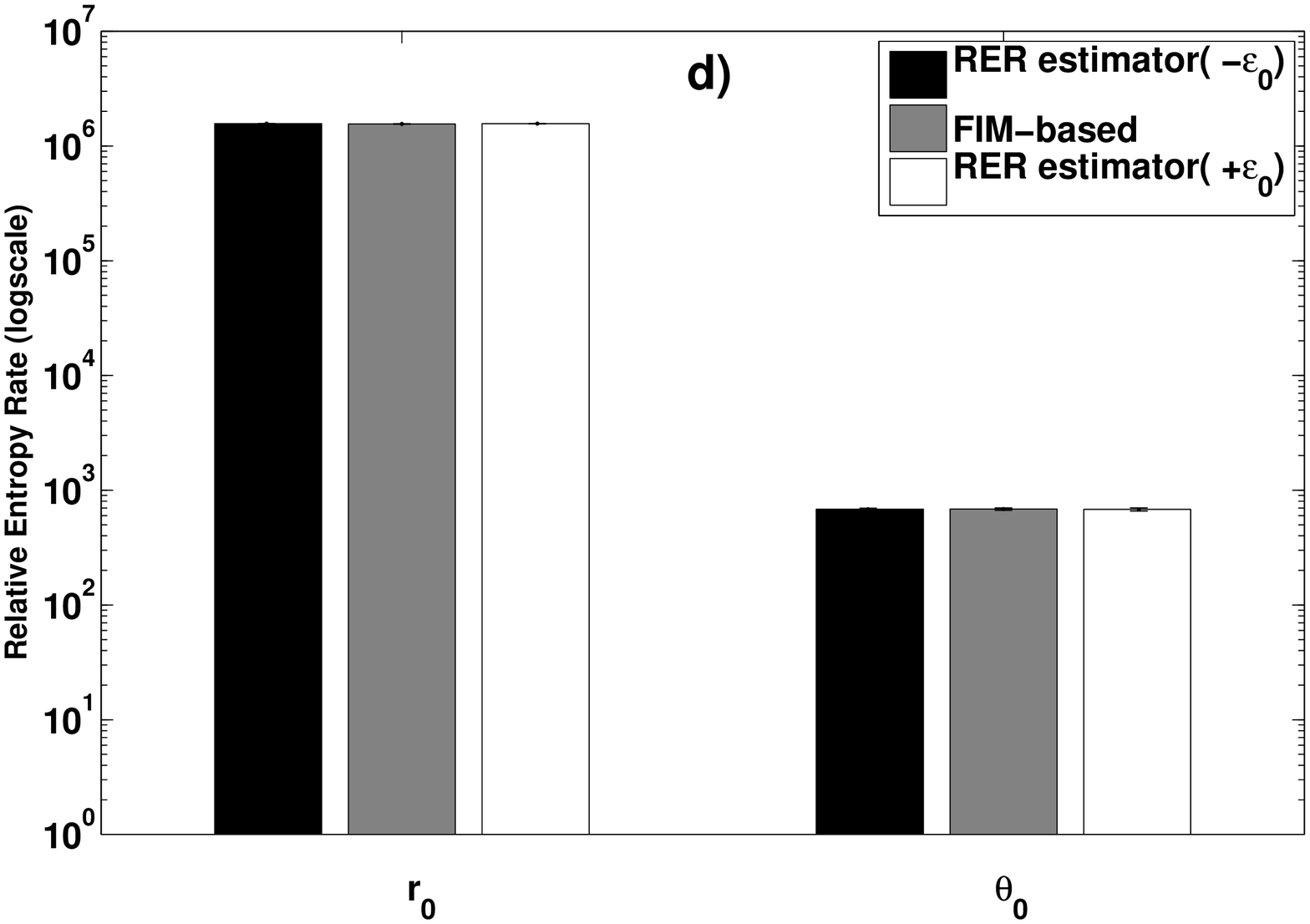}}	
	\caption{$CH_{4}$ per molecule RER-FIM comparison with error bars using the two different estimators for $\pm5\%$ perturbations in all the parameters. Non-bonded (a and b) and bonded (c and d) potential parameters are shown. The parameters are grouped according to their order of magnitude. The most sensitive one is $r_{0}$ followed by $\theta_{0}$ and there has been a minor scaling according to the number of atom-atom pairs. }
	\label{fig:CH4_all_params_RER_FIM}
 	%\end{center}
\end{figure}

\begin{figure} %[h]
	%\begin{center}
	\centerline{\includegraphics[height=20em]{./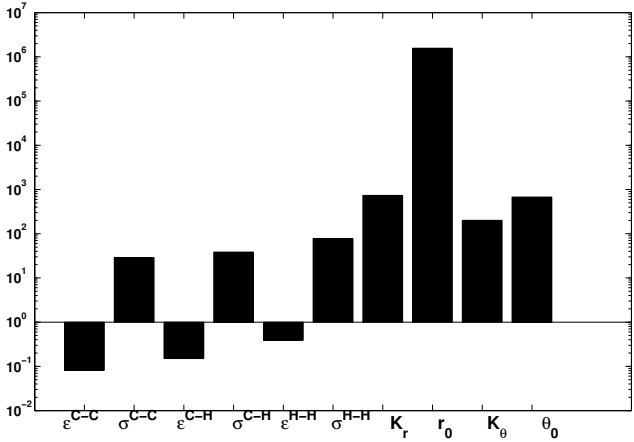}}
	\caption{$CH_{4}$ FIM-based RER comparison for $\pm5\%$ perturbations in logscale. }
	\label{fig:CH4_RER_logscale}
 	%\end{center}
\end{figure}

%----------  RDF   --------------------

As in the case of the RDF of the LJ fluid, an increase in the $\sigma_{LJ}$ parameters shifts the graphs to the right (Figure \ref{fig:CH4_rdf_eps}) due to the repulsive forces.
 All the differences with respect to the $L_{2}$ norm are summarized in Table \ref{tab:L2_rdf_CH4} for clarity.

In addition, from the set of RDF data presented in Figure \ref{fig:CH4_rdf_eps}, an increase in $\sigma_{LJ}^{C-H}$ values results to larger deviations. As we keep the volume fixed, the contribution of the $C-H$ interactions in the packing is larger than that of the $C-C$ pairs because of 
%the smaller $\sigma_{LJ}$ numerical value in the model (see Table \ref{tab:meth_params}) together with 
the larger number of $C-H$ pairs. Following this graph is the one involving $\sigma_{LJ}^{H-H}$ increase because of the even smaller numerical value in comparison to the other $\sigma_{LJ}$'s. At this point the smaller mass of the hydrogens is the reason although the number of pairs (hence interactions) is the largest.

\begin{figure} 	
	%\centerline{\includegraphics[height=20em]{./figures/CH4_RDF_eps_p5perc.eps}}
	\centerline{\includegraphics[height=20em]{./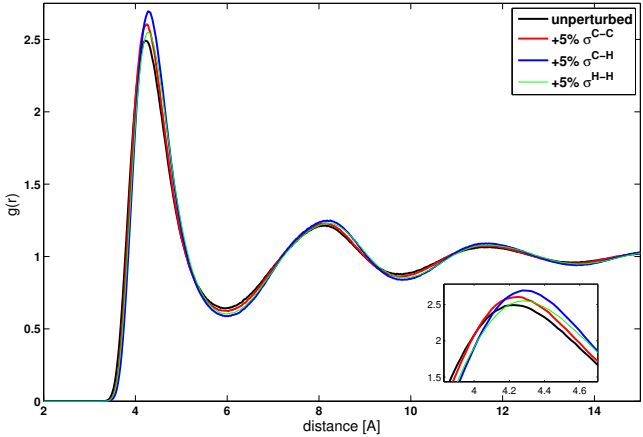}	}
	\caption{$CH_{4}$ molecular g(r) for $+5\%$ perturbations on $\sigma_{LJ}$. The tail of the plot varies slightly hence the zoomed region differs more. As in the LJ fluid case, the $\sigma_{LJ}$ defines the shift of the curve horizontally. }
	\label{fig:CH4_rdf_eps}
\end{figure}

\begin{table} %[h]
\centering
\begin{tabular}{ | l || c | c|}
  \hline
   perturbation & $ || g^{\theta}(r) - g^{\theta+\epsilon_{0}}(r) ||_{L_{2}}$ & $ || g^{\theta}(r) - g^{\theta-\epsilon_{0}}(r) ||_{L_{2}}$ \\
  \hline
 $\epsilon^{C-C}$ & $1.0\times 10^{-2}$ & $1.2\times 10^{-2}$   \\
  $\sigma^{C-C}$ &  $ \bf{1.1\times 10^{-1}}$ & 5.7$\times 10^{-2}$  \\
 $\epsilon^{C-H}$ & $1.7\times10^{-2}$ & $9.6 \times10^{-3}$ \\
 $\sigma^{C-H}$  & $ \bf{2.8\times 10^{-1}} $ & $ \bf{1.7\times 10^{-1} }$ \\
 $\epsilon^{H-H}$  & $ 1.4\times10^{-2} $ & $ 1.15\times10^{-2} $ \\
 $\sigma^{H-H}$  & $\bf{2.05\times 10^{-1} }$ & $\bf{1.6\times 10^{-1} }$  \\
 $K_{b}$  &  $1.1\times10^{-2}$ & $9.7\times10^{-3}$ \\
 $r_{0}$  &  $\bf{1.01\times 10^{-1}}$  &$ \bf{1.1\times 10^{-1}}$\\
 $K_{\theta}$  & $8.7\times10^{-3}$ & $8.2\times10^{-3}$  \\
 $\theta_{0}$  & $9 \times 10^{-3}$ & $9 \times 10^{-3}$ \\
  \hline
\end{tabular}
\caption{$L_{2}$ norm of the difference of the unperturbed minus the perturbed g(r) for $\pm5\%$ perturbation.}
\label{tab:L2_rdf_CH4}
\end{table}

%comments- MSD
The MSD plots indicate the $\sigma_{LJ}^{CH},\sigma_{LJ}^{HH}$ as the most sensitive parameters. An increase in $\sigma_{LJ}$ results to increased collisions and smaller diffusion coefficient (smaller MSD) as can be seen in Figure \ref{fig:CH4_MSD}. As in the LJ case, positive $\sigma_{LJ}$ perturbations (for all three types) result to greater repulsive forces, hence reduced diffusivity. $\epsilon_{LJ}$ variations slightly affect the MSD with respect to the other parameters and the same holds for $K_{b}$ and $K_{\theta}$ too (we have omitted the plots for brevity). Under this dynamic observable the intramolecular interactions are less relevant than the intermolecular ones, for the specific state point (temperature and density) studied here.

\begin{figure}
	%\begin{center}
	\centerline{\includegraphics[height=19em]{./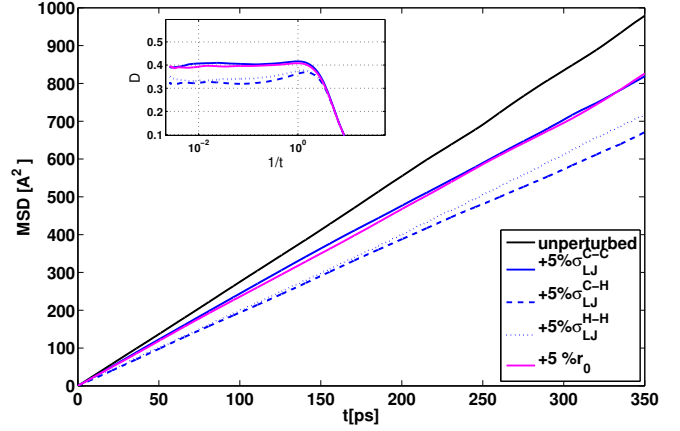}}
	\caption{CH4 MSD for $5\%$ perturbations. We have summarized the most important directions. With respect to this observable, the most sensitive parameter is $\sigma^{C-H}$ followed by $\sigma^{H-H}$. This is in accordance with the RER in Figure ~\ref{fig:CH4_RER_logscale}. The inset illustrates the diffusion coefficient differences in logscale.}
	\label{fig:CH4_MSD}
 	%\end{center}
\end{figure}

% PRESSURE
Pressure calculations for different perturbation directions are summarized in Table \ref{tab:CH4_pressure}. According to this observable quantity $\sigma_{LJ}^{H-H}$ and $r_{0}$ are the most sensitive parameters, which are also indicated by the RE methods. As in the case of the LJ fluid, a change in $\epsilon_{LJ}$ (in all pair types) affects the pressure less than a change in $\sigma_{LJ}$. Pressure rises through an increase in $\sigma_{LJ}$  due to more atom collisions. Additionally, stronger forces account for a higher pressure virial. The presented results are in accordance with earlier work \cite{SA_on_polarizable_water} on an LJ model of water, 
in which sensitivity analysis using partial derivatives of observables with respect to the parameters were used.
 That study also demonstrated that pressure is greatly affected by variations in $\sigma_{LJ}$ and classified the bond length, $\sigma_{LJ}$ and the bond constant as the most sensitive ones.

A change in the bending angle $\theta_{0}$ does not affect the pressure \cite{pressure_angle} as well as
the impact of the constants $K_{b}, K_{\theta}$ on the pressure is minimal.
We note that the unperturbed system pressure is higher than 1atm because the model we chose (forcefield and integrator) does not reproduce the whole $CH_{4}$ phase diagram precisely. Such small deviations from the equations of state and experiments are expected.
We refer to the supplementary materials for more figures and results which were omitted here for brevity.

\begin{table}[h]
\centering
\begin{tabular}{ | c || c | c | c ||c || c|  }
  \hline
   perturbation & $ P^{\theta+\epsilon_{0}}$[atm] & $\frac{P^{\theta+\epsilon_{0}}-P^{\theta}}{|P^{\theta}|}$ & $\sigma_{STD}$ & $ P^{\theta-\epsilon_{0}}$[atm] & $\frac{P^{\theta-\epsilon_{0}}-P^{\theta}}{|P^{\theta}|}$ \\
  \hline
  unperturbed  &  19.7  & -  & 58.4 &  - & - \\
 $ \epsilon^{C-C}$ & -3.9  & -1.2 & 51.7& 33.1 & +0.7\\
 $\sigma^{C-C}$ &  -2.3&  -1.1 & 53.1 & 87.4 & +3.4\\
 $\epsilon^{C-H}$ &  -31.3   & -2.6 & 52.2 & 63.6 & +2.2\\
 $\sigma^{C-H}$ & $\boldsymbol{177.2}$  & $\boldsymbol{+8}$  & 56.7 & 44.3& +1.2\\
 $\epsilon^{H-H}$ &  8.27   & -0.6  & 49.1 & 23.7& +0.2\\
 $\sigma^{H-H}$ &  $\boldsymbol{437}$ & $\boldsymbol{+21.2}$ & 56.3 & $\boldsymbol{195}$ & $\boldsymbol{+10.9}$\\
 $ K_{b} $ & 15.3  & -0.2  & 49.6 & 14.3 & -0.3 \\
 $ r_{0}  $ & $\boldsymbol{281}$   & $\boldsymbol{+13.3}$ & 56.5 & $\boldsymbol{217}$ &  $\boldsymbol{+12}$\\
 $ K_{\theta} $ & 18.6 & -0.05 & 52.9 & 14.9 & -0.2 \\
 $ \theta_{0} $ & 13.7 & -0.3 & 52.45 & 13.3 & -0.3\\
  \hline
\end{tabular}
\caption{Pressure for $\pm5\%$ perturbation of different directions and the corresponding standard deviation.
The most sensitive parameters $r_{0}$ and $\sigma_{LJ}^{H-H}$  increase the pressure substantially.}
\label{tab:CH4_pressure}
\end{table}

%===========================================

%\newpage

\section{Conclusion}
\label{conclusions_section}
In this work, we extended the parametric SA approach of Ref. \onlinecite{PantazisRER} to stochastic molecular dynamics.
The focus was set particularly to the Langevin equation, however, it is applicable to any molecular system that can be
described by a system of stochastic differential equations. The presented SA approach is based on the relative entropy
per unit time of the path distribution at a reference parameter point with respect to the path distribution at a perturbed
parameter point. Major advantages of this method are that: i) it is capable of handling non-equilibrium steady state systems and ii)
it is computationally tractable through the expansion of the RER which results in the pathwise FIM.
Pathwise FIM provides a fast ``gradient-free'' method for parametric SA since it provides an estimate --up to third-order accuracy--
of the RER for different perturbation directions through a simple matrix multiplication.

We examined two systems; the well-known prototypical LJ fluid and a more complex one: methane ($CH_{4}$).
SA on the LJ fluid system was based on the potential parameters $\epsilon_{LJ}, \sigma_{LJ}$ with the latter being
the more sensitive to perturbations whereas $CH_{4}$ involved 6 intermolecular and 4 intramolecular potential
parameters with the intramolecular parameters being the most sensitive in terms of RER. Static and dynamic
observable quantities such as the radial distribution function, the mean square displacement and the pressure
validated the proposed SA approach. Theoretical justification of the proposed SA approach is also provided
through the Pinsker inequality. We also investigated the effect of the potential cutoff radius, $r_{cut}$, by numerically
computing the RER showing first that RER can be used as an information criterion for assigning appropriate values
to parameters of the system and second that $5\%$ perturbation of $\sigma_{LJ}$ produce greater impact than 
changing $r_{cut}$ from $4\sigma_{LJ}$ to $1.6\sigma_{LJ}$.

Finally, RE for high-dimensional systems was used as a measure of loss of information in coarse-graining.
\cite{Shell_2012,Kats_PP_DT_CGanderror,Emery:98}
%Note that relative entropy measures can be used to compare two systems (path measures) in the same level
%of description (e.g. microscopic) \cite{Kats_PP_DT_CGanderror}, as well as in order to infer information and
%dynamical properties across different scales (e.g. from atomistic to coarse-graining) \cite{Shell_2012,Zabaras,Espaniol_2013}.
Coarse-graining (CG) methods of stochastic systems allow for constructing optimal parametrized Markovian
coarse-grained dynamics within a parametric family, by minimizing the information loss (i.e., the relative entropy) on the
path space. Application of RE to the error analysis of coarse-graining of stochastic particle systems have been
pioneered in these papers. \cite{MK_Majda_Vlachos_2003_PNAS,MK_Majda_Vlachos_2003_JCP,MK_LRB_PP_DT_2008}
%Entropy-based computational techniques were used for constructing approximations of CG effective potentials
%for models of large biomolecules and polymeric systems. Optimal parametrization of effective potentials based
%on minimizing the RE between equilibrium Gibbs states \cite{Bilionis_Zabaras_2013,Carmichael_Shell_2012,Chaimovich_Shell_2011},
%extended previously developed inverse Monte Carlo methods primarily based on force matching approaches,
%used in coarse-graining of macromolecules \cite{Muller_Plathe_CG_2002,Tschop_Kremer_CG_1998}.
Recent ongoing work on application of the RE framework for CG in the non-equilibrium regime where there's
no Gibbs structure can be found in Ref. ~\onlinecite{MK_PP_proof2013}.
We aim to utilize the current SA method to tackle with more complex hybrid macromolecular materials or biomolecular
systems in and out-of equilibrium conditions. \cite{Rissanou13,Harmandaris14,Rissanou14}
Another goal is to adapt the RE method to quantify and indicate the most efficient CG mapping of mesoscale simulations. \cite{Kalligiannaki15}

\begin{acknowledgments}
This research has been co-financed by the European Union (European Social Fund -- ESF) and Greek national
funds through the Operational Program ``Education and Lifelong Learning'' of the National Strategic Reference
Framework (NSRF) -- Research Funding Programs: THALES and ARISTEIA II. The research of MAK was supported  in part by the Office of Advanced
Scientific Computing Research, U.S. Department of Energy under Contract No.
 DE-FG02-13ER26161/DE-SC0010723.
\end{acknowledgments}

%==============================

\appendix

\section{Pathwise SA at the discrete-time level}
\label{apdx:discrete}

In Section~\ref{pathwise_SA_meth}, we perform SA by first deriving RER and the corresponding pathwise FIM
for the continuous-time stochastic Langevin process and then discretizing the process to get numerical estimates for
these quantities. We can reverse the order of SA and first discretize the Langevin process and then derive the
RER and the pathwise FIM. Here, the latter approach is presented using the BBK algorithm as a numerical integrator
of the Langevin process which defines a discrete-time Markov chain. A preliminary example of this
approach can be found in Ref.~\onlinecite{PantazisRER}. In the BBK integrator, the Hamiltonian part of the
Langevin equation (\ref{eq:system}) is integrated with the Verlet propagator whereas the thermostat is an
Ornstein-Uhlenbeck process and the explicit/implicit propagator is used.

The BBK algorithm \cite{FreeEnergy} reads {\small
\begin{equation}
 \left\{
 \begin{array}{l}
	{p}_{i+\frac{1}{2}} = {p}_i - \nabla V({q}_i) \frac{\Delta t}{2} - \gamma M^{-1} {p}_i \frac{\Delta t}{2} + \sigma \Delta W_i  \\
	{q}_{i+1} = {q}_i + \Delta t M^{-1} {p}_{i+\frac{1}{2}} \\
	{p}_{i+1} =   {p}_{i+\frac{1}{2}}  - \nabla V({q}_{i+1}) \frac{\Delta t}{2}  -\gamma M^{-1} {p}_{i+1}\frac{\Delta t}{2} + \sigma \Delta W_{i+\frac{1}{2}}
\end{array}  \right.
 \label{eq:BBK}
\end{equation}}
$\Delta W_i, \Delta W_{i+\frac{1}{2}} $ are iid Gaussian random vectors with zero mean and covariance matrix
$\frac{\Delta t}{2}I_{dN}$ while $\Delta t$ is the time step of the numerical scheme. Notice that other choices of numerical
integrators can be utilized such as the ones proposed by Leimkuhler et al. \cite{Leimkuhler,Leimkuhler:13} which introduce a
relatively weak perturbative effect on the physical dynamics.

We define the state of the discrete-time system at time-step $i$ as  ${z}_i = ({q}_i, {p}_i) \in\mathbb R^{2dN}$. The process
$\{{z}_i\}_{i=0}^M$ for the BBK integrator is a Markov chain with transition probability $P^\theta(z_i,z_{i+1})$ where
$\theta \in \mathbb{R}^{K}$ is the vector of the system's parameters.
Notice that the length of the discrete-time process is related with the time window of the continuous-time process
through $T=M\Delta t$. The path space probability density, $\bar{Q}_{0:M}^{\theta}(\cdot)$, is defined as
\begin{equation}
	\bar{Q}_{0:M}^{\theta}(\{{z}_i\}_{i=0}^M) = \bar{\mu}^{\theta}({z}^{0}) \prod_{i=0}^{M-1} P^{\theta}({z}_{i}, {z}_{i+1}) \ ,
	\label{eq:path_sp_prob}
\end{equation}
where $\bar{\mu}^{\theta}(\cdot)$ denotes the stationary distribution of the discrete-time. As in the continuous-time
case, we perturb the parameter vector, $\theta$, by adding a perturbation vector $\epsilon_{0} \in \mathbb{R}^{K}$.
At the stationary regime, the pathwise relative entropy of $\bar{Q}_{0:M}^{\theta}$ with respect to $\bar{Q}_{0:M}^{\theta+\epsilon_{0}}$
admits also a decomposition into a linear in time term plus a constant. \cite{PantazisRER} Indeed, it holds that
\begin{equation}
	\mathcal{R}(\bar{Q}_{0:M}^{\theta}| \bar{Q}_{0:M}^{\theta+\epsilon_{0}}) = M \mathcal{H}(\bar{Q}^{\theta}| \bar{Q}^{\theta+\epsilon_{0}})
	+ \mathcal{R}(\bar{\mu}^{\theta}| \bar{\mu}^{\theta+\epsilon_{0}}) \ ,
	\label{eq:Rel_ent_RER}
\end{equation}
where $\mathcal{R}(\bar{\mu}^{\theta}| \bar{\mu}^{\theta+\epsilon_{0}})$ is the relative entropy between the stationary distributions
while $\mathcal{H}(\bar{Q}^{\theta}| \bar{Q}^{\theta+\epsilon_{0}})$ is the RER of the discrete-time Markov chain given by {\small
\begin{equation}
	\mathcal{H}(\bar{Q}^{\theta}| \bar{Q}^{\theta+\epsilon_{0}}) = \mathbb{E}_{\bar{\mu}^{\theta}}
	\Bigg[ \int_{\mathbb R^{2dN}} P^{\theta}({z},{z'}) \log \frac{P^{\theta}({z},{z'})}{P^{\theta+\epsilon_{0}}({z},{z'})} d{z'} \Bigg] \ .
	\label{eq:RER_discrete}
\end{equation}}
The discrete-time RER is related with the continuous-time RER through \cite{Kalligiannaki15}
\begin{equation}
\mathcal{H}({Q}^{\theta}| {Q}^{\theta+\epsilon_{0}}) = \lim_{\Delta t\rightarrow0} \frac{1}{\Delta t}\mathcal{H}(\bar{Q}^{\theta}| \bar{Q}^{\theta+\epsilon_{0}}) \ .
\end{equation}

As expected, discrete-time RER is locally a quadratic functional in
a neighborhood of $\theta$ hence its curvature around $\theta$, defined
by the Hessian, is the pathwise FIM which is given by \cite{PantazisRER} {\small
\begin{equation}
\begin{aligned}
	& F_{\mathcal{H}}(\bar{Q}^{\theta}) = \\
	& \mathbb{E}_{\bar{\mu}^{\theta}}  \Bigg[ \int_{\mathbb R^{2dN}} P^{\theta}({z}, {z'})\nabla_{\theta} \log P^{\theta}({z}, {z'}) \nabla_{\theta} \log P^{\theta}({z},{z'})^T d{z'} \Bigg] \ .
	\label{eq:FIM_discrete}
\end{aligned}	
\end{equation}}
We refer to (\onlinecite{PantazisRER}) for statistical estimators of the discrete-time RER and the 
corresponding pathwise FIM while in Supplementary Materials we provide detailed formulas for
the numerical calculation of (\ref{eq:RER_discrete}) and (\ref{eq:FIM_discrete}) for the BBK integrator.

\section{Expansion of the continuous-time RER}
We now expand the RER in eq. (\ref{eq:RER_cont}) through Taylor series expansion around the point $\theta$.
We start with expanding the $m$-th component of the force, $F^{\theta+\epsilon_0}(q)$, around $\theta$
\begin{equation}
	F_m^{\theta+\epsilon_0}({q}) = F_m^{\theta}({q}) + \nabla_{\theta}F_m^{\theta}({q}) \epsilon_0
	+ \frac{1}{2}\epsilon_0^{T}\nabla^{2}_{\theta} F_m^{\theta}({q})\epsilon_0 + \mathcal{O}(|\epsilon_0|^3)
\end{equation}
where $\nabla$ denotes the $1\times K$ gradient vector while $\nabla^{2}$ denotes the $K\times K$ Hessian matrix.
Then, the RER is written as {\small
\begin{equation*}
\begin{aligned}
& \mathcal{H}(Q^{\theta} | Q^{\theta+\epsilon_0}) \\
&= \frac{1}{2} \mathbb{E}_{\mu^\theta}[ ( F^{\theta+\epsilon_0}({q}) - F^\theta({q}))^T (\sigma\sigma^T)^{-1} ( F^{\theta+\epsilon_0}({q}) - F^\theta({q}))] \\
&= \frac{1}{2} \sum_{m,n=1}^{dN} \mathbb{E}_{\mu^\theta} [(F_m^{\theta+\epsilon_0}({q}) - F_m^\theta({q})) ((\sigma\sigma^T)^{-1})_{m,n}
(F_n^{\theta+\epsilon_0}({q}) - F_n^\theta({q}))] \\
&= \frac{1}{2} \sum_{m,n=1}^{dN} ((\sigma\sigma^T)^{-1})_{m,n} \mathbb{E}_{\mu^\theta} [\nabla_{\theta}F_m^{\theta}(q)\epsilon_0
\nabla_{\theta}F_n^{\theta}(q)\epsilon_0] \\
&+ \frac{1}{2} \sum_{m,n=1}^{dN} ((\sigma\sigma^T)^{-1})_{m,n} \mathbb{E}_{\mu^\theta} [\nabla_{\theta}F_m^{\theta}(q)\epsilon_0
\epsilon_0^{T}\nabla^{2}_{\theta} F_m^{\theta}(q)\epsilon_0] + O(|\epsilon_0|^4) \ .
%&= \frac{1}{2} \epsilon_0^T \mathbb{E}_{\mu^\theta} [(\nabla_{\theta}F^{\theta})^T (\sigma\sigma^T)^{-1}
%\nabla_{\theta}F^{\theta}] \\
%&+ \frac{1}{2} (\epsilon_0 \otimes \epsilon_0)^T \mathbb{E}_{\mu^\theta}
%[ \nabla_{\theta} F^{\theta} \otimes ((\sigma\sigma^T)^{-1} \nabla_{\theta}^2 F^{\theta}] \epsilon_0 + O(|\epsilon_0|^4)
\end{aligned}
\end{equation*}}
The pathwise FIM comes from the second-order term while the third-order term defines a tensor matrix.

For the LJ non-bonded potential, the leading term of the second-order term (i.e., the pathwise FIM) in the RER
expansion when $\sigma_{LJ}$
is perturbed is of order $O\big(\big( \frac{\sigma_{LJ}}{r} \big)^{\bf{10}}\big)$ while the leading term of the third-order
term of RER is of order $O\big(\big( \frac{\sigma_{LJ}}{r} \big)^{\bf{9}}\big)$ with (typically) $\sigma_{LJ}<r$. 
The fact that the leading term of the third-order term has smaller exponent compared to the second-order term,
makes the contribution of the third-order term to the value of RER significant on average. Therefore, the asymmetry between
$\mathcal{H}(Q^{\theta} | Q^{\theta+\epsilon_0})$ and $\mathcal{H}(Q^{\theta} | Q^{\theta-\epsilon_0})$ observed
both in the LJ fluid (Figure~\ref{fig:RER_new_formula}) and the methane (Figure~\ref{fig:CH4_all_params_RER_FIM})
stems exactly from the significance of the third-order term. Notice that asymmetries
between positive and negative perturbations are not rare and have been observed in biological reaction models and one
method that is employed for assessing parameter identifiability in non-linear models is the profile likelihood method.
\cite{Raue_profile_likelihood}

\section{Potential energy terms of $CH_4$}
In this section, the details of the total potential $V({q})= V_{bond}({q}) + V_{angle}({q}) + V_{LJ}({q})$
for the methane model are presented. The total bond potential equals to
%\begin{equation}
%	V_{bond}(\boldsymbol{q}) = \underbrace{ \sum_{\overset{i=1}{i='C'}}^{N} \sum_{\overset{j=1}{j='H'} }^{N} }_\text{i,j same CH4} V_{bond}(|q_j - q_i|)
%\end{equation}
\begin{equation}
	V_{bond}({q}) = \sum_{\mathcal{A}} V_{bond}(|q_j - q_i|)
\end{equation}
where
\begin{equation*}
\begin{split}
	\mathcal{A} = \{\text{$q_{i}$=C, $q_{j}$=H $q_{i},q_{j}\in$ same $CH_{4}$},\\
	\hfill \text{4 bonds per $CH_{4}$} \}
\end{split}	
\end{equation*}
while the local bond potential is
\begin{equation}
	V_{bond}(|q_j - q_i|) = V_{bond}(r_{ij}) = \frac{1}{2} K_b (r_{0} - q_{ij} )^2 \ .
\end{equation}
The two constants $r_0$ and $K_b$ determine the distance and the strength of the bond between the
two atoms, respectively.

The angle defined for each triplet $H-C-H$ on the same $CH_{4}$ molecule is denoted by $\theta_{jik}$. Then, the total
angular potential is
%\begin{equation}
%	V_{angle}(\boldsymbol{q}) =  \underbrace{ \sum_{\overset{i = 1}{i='C'}}^{N} \sum_{\overset{j>i}{j='H'}  }^{N} \sum_{\overset{k>j}{k='H'}}^{N} }_\text{i,j,k in same CH4} V_{angle}(\angle{q_j q_i q_k})
%\end{equation}
\begin{equation}
	V_{angle}({q}) =  \sum_{\mathcal{B}} V_{angle}(\angle{q_j q_i q_k})
\end{equation}
where
\begin{equation*}
\begin{split}	
	\mathcal{B}= \{\text{$q_{i}$=C, $q_{j},q_{k}$=H, $q_{i},q_{j},q_{k} \in$ same $CH_{4}$},\\
	 \hfill \text{6 angles per $CH_{4}$}\} \ ,
\end{split}	
\end{equation*}
while the local angular potential is given by
\begin{equation}
	V_{angle}(\angle{q_j q_i q_k}) = V_{angle}(\theta_{ijk}) = \frac{1}{2} K _{\theta} (\theta_{0} - \theta_{ijk} )^2 \ .
\end{equation}
The two constants $\theta_0$ and $K _{\theta}$ determine the degree and the strength of the angle, respectively.

Moreover, the non-bonded term of the potential energy, $V_{LJ}({q})$, is given by
% \begin{equation}
%	V_{LJ}(\boldsymbol{q}) = \sum_{\overset{1\leq i \leq N}{i < j} } \sum_{\overset{1\leq j \leq N}{j \notin i-th CH_4}} V_{LJ}(|q_j-q_i|)
%	\label{eq:VLJ}
%\end{equation}
\begin{equation}
	V_{LJ}({q}) = \sum_{\mathcal{C}} V_{LJ}(|q_j-q_i|)
	\label{eq:VLJ}
\end{equation}
where
\begin{equation*}
	\mathcal{C} = \{\text{$q_{i}, q_{j}$=H or C, $q_{i},q_{j}\in$ different $CH_{4}$} \}
\end{equation*}
while the functional form of the LJ potential, $V_{LJ}(r_{ij})$, is given by (\ref{LJ_potential}).

Since the LJ potential is the non-bonded term, the sum in (\ref{eq:VLJ}) is over all the atoms of the other methanes.
It is convenient furthermore to divide this sum into three sums, each one corresponding on a different class of
interactions between $C-C, C-H, H-H$. Thus, we can rewrite
% \begin{align}
%	V_{LJ}(\boldsymbol{q}) &= \sum_{ \underbrace{\overset{1\leq i \leq N}{i < j}}_\text{i='C'} } \sum_{ \underbrace{\overset{1\leq j \leq N}{j \notin i-th CH_4}}_\text{j='C'}} V_{LJ}^{C-C}(r_{ij}) \nonumber \\
%		   &+ \sum_{ \underbrace{\overset{1\leq i \leq N}{i < j}}_\text{i='H'} } \sum_{ \underbrace{\overset{1\leq j \leq N}{j \notin i-th CH_4}}_\text{j='H'}} V_{LJ}^{H-H}(r_{ij}) \nonumber \\
%		   &+ \sum_{ \underbrace{\overset{1\leq i \leq N}{i < j}}_\text{i='H'} } \sum_{ \underbrace{\overset{1\leq j \leq N}{j \notin i-th CH_4}}_\text{j='C'}} V_{LJ}^{H-C}(r_{ij})
%\end{align}
 \begin{equation}
	V_{LJ}({q}) = \sum_{\mathcal{C}_{1}} V_{LJ}^{C-C}(r_{ij})
		   + \sum_{\mathcal{C}_{2}} V_{LJ}^{H-H}(r_{ij}) + \sum_{\mathcal{C}_{3}} V_{LJ}^{H-C}(r_{ij}) \ ,
\end{equation}
where
\begin{align*}
	\mathcal{C}_{1} &= \{\text{$q_{i}, q_{j}$= C} \}  \\
	\mathcal{C}_{2} &= \{\text{$q_{i}$= C, $q_{j}$= H, $q_{i},q_{j}\in$ different $CH_{4}$} \}  \\
	\mathcal{C}_{3} &= \{\text{$q_{i}, q_{j}$= H, $q_{i},q_{j}\in$ different $CH_{4}$} \} \ .
\end{align*}
Each LJ potential has its own parameter values.

\bibliographystyle{unsrt}%{abbrv}%{plainnat}%{unsrtnat}%{abbrv}%
\addcontentsline{toc}{chapter}{References}
\bibliography{refs_from_bibdesk}

\begin{thebibliography}{10}

\bibitem{CompSim}
M.~P. Allen and D.~J. Tildesley.
\newblock {\em Computer simulation of liquids}.
\newblock Clarendon Press, New York, NY, USA, 1987.

\bibitem{Frenkel2001}
Daan Frenkel and B.~Smit.
\newblock {\em {Understanding Molecular Simulation, Second Edition: From
  Algorithms to Applications (Computational Science)}}.
\newblock Academic Press, 2001.

\bibitem{Theodorou_book}
M.J. Kotelyanskii and D.N. Theodorou.
\newblock {\em Simulation Methods for Polymers}, volume Chapter "Molecular
  Dynamics Simulations of Polymers".
\newblock Marcel Dekker, New York, 2004.

\bibitem{Shaw:10}
D.~E. Shaw, P.~Maragakis, K.~Lindorff-Larsen, S.~Piana, R.~Dror, M.~P.
  Eastwood, J.~A. Bank, J.~M. Jumper, J~Salmon, Y.~Shan, and W.~Wriggers.
\newblock Atomic-level characterization of the structural dynamics of proteins.
\newblock {\em Science}, 330(6002):341--346, 2010.

\bibitem{Harmandaris_Fritz}
D.~Fritz, V.~A. Harmandaris, K.~Kremer, and N.~Van~der Vegt.
\newblock Coarse-grained polymer melts based on isolated atomistic chains:
  simulation of polystyrene of different tacticities.
\newblock {\em Macromolecules}, 42(19):7579--7588, 2009.

\bibitem{Harmandaris_Johnston_review_softmatter}
V.~Johnston and V.~Harmandaris.
\newblock Hierarchical simulations of hybrid polymer/solid materials.
\newblock {\em Soft Matter}, 9:6696--6710, 2013.

\bibitem{Chernatynskiy_review}
A.~Chernatynskiy, S.~Phillpot, and R.~LeSar.
\newblock Uncertainty quantification in multiscale simulation of materials: A
  prospective.
\newblock {\em Annual review of Materials Research}, 43:157--182, July 2013.

\bibitem{Koumoutsakos}
P.~Angelikopoulos, C.~Papadimitriou, and P.~Koumoutsakos.
\newblock Bayesian uncertainty quantification and propagation in molecular
  dynamics simulations: A high performance computing framework.
\newblock {\em J. Chem. Phys}, 137(14), 2012.

\bibitem{Liu_Chen}
H.~Liu, A.~Sudjianto, and W.~Chen.
\newblock Relative entropy based method for probabilistic sensitivity analysis
  in engineering design.
\newblock {\em Journal of Mechanical Design}, 128(2):326--336, 2005.

\bibitem{Pernot}
F.~Cailliez and P.~Pernot.
\newblock Statistical approaches to forcefield calibration and prediction
  uncertainty in molecular simulation.
\newblock {\em J Chem. Phys.}, 134(5), 2011.

\bibitem{Rizzi}
F.~Rizzi, H.~N. Najm, B.~J. Debusschere, K.~Sargsyan, M.~Salloum,
  H.~Adalsteinsson, and O.~M. Knio.
\newblock Uncertainty quantification in md simulations. part i: forward
  propagation.
\newblock {\em SIAM Multiscale Model. Simul.}, 10(4):1428--1459, 12 2012.

\bibitem{Rao}
S.~K. Rao, R.~Imam, K.~Ramanathan, and S.~Pushpavanam.
\newblock Sensitivity analysis and kinetic parameter estimation in a three way
  catalytic converter.
\newblock {\em Industrial \& Engineering Chemistry Research}, 48:3779--3790,
  2009.

\bibitem{PantazisRER}
Y.~Pantazis and M.~Katsoulakis.
\newblock A relative entropy rate method for path space sensitivity analysis of
  stationary complex stochastic dynamics.
\newblock {\em J. Chem. Phys}, 138, 2013.

\bibitem{Emery:98}
A.~F. Emery and A.~V. Nenarokomov.
\newblock Optimal experiment design.
\newblock {\em Measurement Science and Technology}, 9(6):864, 1998.

\bibitem{Cooke}
B.~Cooke and S.~C. Schmidler.
\newblock Statistical prediction and molecular dynamics simulation.
\newblock {\em Biophysical Journal}, 95:4497--4511, November 2008.

\bibitem{Braatz:06}
R.~D. Braatz, R.~C. Alkire, E.~Seebauer, E.~Rusli, R.~Gunawan, T.O. Drews,
  X.~Li, and Y.~He.
\newblock Perspectives on the design and control of multiscale systems.
\newblock {\em J. Proc. Control}, 16:193--204, 2006.

\bibitem{Gunawan}
R.~Gunawan, Y.~Cao, L.~Petzold, and F.~J.~III Doyle.
\newblock Sensitivity analysis of discrete stochastic systems.
\newblock {\em Biophysical Journal}, 88:2530--2540, 2005.

\bibitem{Sheppard_Khammash}
M.~Rathinam, P.~W. Sheppard, and M.~Khammash.
\newblock Efficient computation of parameter sensitivities of discrete
  stochastic chemical reaction networks.
\newblock {\em J. Chem. Phys.}, 132:034103--(1--13), 2010.

\bibitem{Anderson:12}
D.~F. Anderson.
\newblock {An efficient finite difference method for parameter sensitivities of
  continuous-time Markov chains}.
\newblock {\em {SIAM J. Numerical Analysis}}, 50(5):2237--2258, 2012.

\bibitem{AK:2013}
G.~Arampatzis and M.~A. Katsoulakis.
\newblock Goal-oriented sensitivity analysis for lattice kinetic monte carlo
  simulations.
\newblock {\em {J. Chem. Phys.}}, 12(140):124108, 2014.

\bibitem{Glynn:90}
P.~W. Glynn.
\newblock Likelihood ratio gradient estimation for stochastic systems.
\newblock {\em Communications of the ACM}, 33(10):75--84, 1990.

\bibitem{Plyasunov}
S.~Plyasunov and A.~P. Arkin.
\newblock Efficient stochastic sensitivity analysis of discrete event systems.
\newblock {\em J. Comput. Phys.}, 221(2):724--738, February 2007.

\bibitem{Warren_Allen}
P.~B. Warren and R.~J. Allen.
\newblock Steady-state parameter sensitivity in stochastic modeling via
  trajectory reweighting.
\newblock {\em J. Chem. Phys.}, 136(10), 2012.

\bibitem{Iordanov}
T.~Iordanov, G.~Schenter, and B.~Garret.
\newblock Sensitivity analysis of thermodynamic properties of liquid water: 
  a general approach to improve empirical potentials.
\newblock {\em J. Phys. Chem}, A(110):762--771, 2006.

\bibitem{Majda_Gershgorin_2010}
A.~Majda and B.~Gershgorin.
\newblock Quantifying uncertainty in climate change science through empirical
  information theory.
\newblock {\em PNAS}, 107(34):14958--14963, 2010.

\bibitem{Harman_non_eq}
C.~Baig and V.~Harmandaris.
\newblock Quantitative analysis on the validity of a coarse-grained model for
  nonequilibrium polymeric liquids under flow.
\newblock {\em Macromolecules}, (43):3156--3160, 2010.

\bibitem{Haas:2013}
K.~R. Haas, H.~Yang, and J-W Chu.
\newblock {Fisher information metric for the Langevin equation and least
  informative models of continuous stochastic dynamics}.
\newblock {\em {J. Chem. Phys.}}, {139}({12}), {SEP 28} {2013}.

\bibitem{Johnston13}
V.~Johnston and V.~Harmandaris.
\newblock Hierarchical multiscale modeling of polymer−solid interfaces:
  Atomistic to coarse-grained description and structural and conformational
  properties of polystyrene−gold systems.
\newblock {\em Macromolecules}, 46:5741−5750, 2013.

\bibitem{Rissanou14}
A.~Rissanou and V.~Harmandaris.
\newblock Dynamics of various polymer/graphene interfacial systems through
  atomistic molecular dynamics simulations.
\newblock {\em Soft Matter}, 42(10):2876--2888, 2014.

\bibitem{Harmandaris14}
V.~Harmandaris.
\newblock Quantitative study of equilibrium and non-equilibrium polymer
  dynamics through systematic hierarchical coarse-graining simulations.
\newblock {\em Korea-Aust. Rheol. J.}, 26:15--28, 2014.

\bibitem{Smit}
B.~Smit.
\newblock Phase diagrams of lennard-jones fluids.
\newblock {\em J. Chem. Phys}, 96(11), 6 1992.

\bibitem{Dreiding}
S.~Mayo, B.~Olafson, and W.~Goddard.
\newblock Dreiding: a generic force field for molecular simulations.
\newblock {\em J. Phys. Chem.}, 94(26):8897--8909, 1990.

\bibitem{FreeEnergy}
T.~Lelievre, M.~Rousset, and G.~Stoltz.
\newblock {\em Free energy calculations}.
\newblock Imperial College Press, 2010.

\bibitem{CoverThomas}
T~Cover and J.~Thomas.
\newblock {\em Elements of Information Theory}.
\newblock John Wiley and Sons, 1991.

\bibitem{Oksendal}
B.~Oksendal.
\newblock {\em Stochastic Differential Equations: {A}n introduction with
  applications}.
\newblock Springer-Verlag, 5th edition, 2000.

\bibitem{VH_non_rev_shear2000}
V.~Harmandaris, V.~Mavrantzas, and D.~Theodorou.
\newblock Atomistic molecular dynamics simulation of stress relaxation upon
  cessation of steady-state uniaxial elongational flow.
\newblock {\em Macromolecules}, 33(21):8062--8076, 2000.

\bibitem{Leimkuhler:13}
B.~Leimkuhler and C.~Matthews.
\newblock Robust and efficient configurational molecular sampling via langevin
  dynamics.
\newblock {\em J. Chem. Phys}, 138(17), 2013.

\bibitem{Bally:96b}
V.~Bally and D.~Talay.
\newblock The law of the {E}uler scheme for stochastic differential equations.
  {I. C}onvergence rate of the distribution function.
\newblock {\em Probab. Theory Related Fields}, 104:43--60, 1996.

\bibitem{Bally:96a}
V.~Bally and D.~Talay.
\newblock The law of the {E}uler scheme for stochastic differential equations.
  {II. C}onvergence rate of the density.
\newblock {\em Monte Carlo Methods Appl.}, 2:93--128, 1996.

\bibitem{Mattingly:10}
J.~C. Mattingly, A.~M. Stuart, and M.~V. Tretyakov.
\newblock Convergence of numerical time-averaging and stationary measures via
  {P}oisson equations.
\newblock {\em SIAM J. Numer. Anal.}, 48:552--577, 2010.

\bibitem{DKPP:2014}
P.~Dupuis, M.A. Katsoulakis, Y.~Pantazis, and P.~Plech{\'a}{\v{c}}.
\newblock Sesnitivity bounds and error estimates for stochastic models.
\newblock {\em (in preparation)}.

\bibitem{methane_environment}
D.~A. Hensher and K.~J. Button.
\newblock {\em Handbook of Transport and the Environment}.
\newblock Elsevier, 2003.

\bibitem{McQuarrie}
D.~McQuarrie.
\newblock {\em Statistical Mechanics}.
\newblock Harper Collins Publishers, 1976.

\bibitem{SA_on_polarizable_water}
S.~B. Zhu and C.~F. Wong.
\newblock Sensitivity analysis of a polarizable water model.
\newblock {\em J. Chem. Phys}, 98:4695--4701, 1994.

\bibitem{pressure_angle}
H.~Heinz, W.~Paul, and K.~Binder.
\newblock Calculation of local pressure tensors in systems with many-body
  interactions.
\newblock {\em Phys. Rev. E}, 72(6):066704--066714, 2005.

\bibitem{Shell_2012}
M.~S. Shell.
\newblock Systematic coarse-graining of potential energy landscapes and
  dynamics in liquids.
\newblock {\em J. Chem. Phys}, 137(8), 2012.

\bibitem{Kats_PP_DT_CGanderror}
M.~A. Katsoulakis, P.~Plech{\'a}{\v{c}}, L.~Rey-Bellet, and D.~K.
  Tsagkarogiannis.
\newblock Coarse-graining schemes and a posteriori error estimates for
  stochastic lattice systems.
\newblock {\em ESAIM: Mathematical Modelling and Numerical Analysis},
  41:627--660, 5 2007.

\bibitem{MK_Majda_Vlachos_2003_PNAS}
M.~A. Katsoulakis, A.~J. Majda, and D.~G. Vlachos.
\newblock Coarse-grained stochastic processes for microscopic lattice systems.
\newblock {\em Proc. Natl. Acad. Sci. USA}, 100(3):782--787, 2003.

\bibitem{MK_Majda_Vlachos_2003_JCP}
M.~A. Katsoulakis, A.~J. Majda, and D.~G. Vlachos.
\newblock Coarse-grained stochastic processes and monte carlo simulations in
  lattice systems.
\newblock {\em J. Comp. Phys.}, 112:250--278, 2003.

\bibitem{MK_LRB_PP_DT_2008}
M.~A. Katsoulakis, L.~Rey-Bellet, P.~Plech{\'a}{\v{c}}, and D.~K.
  Tsagkarogiannis.
\newblock Mathematical strategies in the coarse-graining of extensive systems:
  error quantification and adaptivity.
\newblock {\em J. Non Newt. Fluid Mech.}, 2008.

\bibitem{MK_PP_proof2013}
M.~A. Katsoulakis and P.~Plech{\'a}{\v{c}}.
\newblock {Information-theoretic tools for parametrized coarse-graining of
  non-equilibrium extended systems}.
\newblock {\em J. Chem. Phys.}, 139(arXiv:1304.7700), Apr 2013.

\bibitem{Rissanou13}
A.~Rissanou, E.~Georgilis, M.~Kasotaskis, A.~Mitraki, and V.~Harmandaris.
\newblock Effect of solvent on the self-assembly of dialanine and
  diphenylalanine peptides.
\newblock {\em J. Phys. Chem. B}, 117:3962--3975, 2013.

\bibitem{Kalligiannaki15}
E.~Kalligiannaki, V.~Harmandaris, M.~Katsoulakis, and P.~Plech{\'a}{\v{c}}.
\newblock Optimizing coarse-grained models for non-equilibrium molecular
  systems: Dynamical force matching across scales.
\newblock {\em to be submitted}.

\bibitem{Leimkuhler}
B.~Leimkuhler, E.~Noorizadeh, and F.~Theil.
\newblock A gentle stochastic thermostat for molecular dynamics.
\newblock {\em J. Stat. Phys.}, (135):261--277, 2009.

\bibitem{Raue_profile_likelihood}
A.~Raue, C.~Kreutz, T.~Maiwald, J.~Bachmann, M.~Schilling, U.~Klingm{\"u}ller,
  and J.~Timmer.
\newblock Structural and practical identifiability analysis of partially
  observed dynamical models by exploiting the profile likelihood.
\newblock {\em Bioinformatics}, 2009.

\end{thebibliography}

\end{document}